\def\01{\{0,1\}}
\newcommand{\ceil}[1]{\lceil{#1}\rceil}
\newcommand{\eps}{\varepsilon}
\newcommand{\ket}[1]{|#1\rangle}
\newcommand{\ketbra}[2]{|#1\rangle\langle#2|}
\newcommand{\diag}{\mbox{\rm diag}}
\newcommand{\polylog}{\mbox{\rm polylog}}
\newcommand{\norm}[1]{\mbox{$\parallel{#1}\parallel$}}
\newcommand{\Cc}{{\mathcal C}} 
\newcommand{\Exp}{\mathbb{E}}
\newcommand{\Sh}{\ensuremath{\mathcal{S}}}
\newcommand{\Fspan}{\mathrm{Fspan}}
\newcommand{\Fdim}{\mathrm{Fdim}}
\newcommand{\T}{\ensuremath{\mathcal{T}}}
\newcommand{\V}{\ensuremath{\mathcal{V}}}
\newcommand{\F}{\ensuremath{\mathbb{F}}}
\newcommand{\R}{\ensuremath{\mathbb{R}}}
\newcommand{\Se}{\ensuremath{\mathcal{S}}}
\newtheorem{definition}{Definition}
\newtheorem{theorem}{Theorem}
\newtheorem{lemma}{Lemma}
\newtheorem{fact}{Fact}
\newtheorem{corollary}{Corollary}
\newtheorem{conjecture}{Conjecture}
\newtheorem{observation}{Observation}
\newtheorem{claim}{Claim}
\newcommand{\pmset}[1]{\{-1,1\}^{#1}} 
\newcommand{\AND}{\mbox{\rm AND}}
\newcommand{\ADV}{\mbox{\rm ADV}}
\def\01{\{0,1\}}
\DeclareMathOperator{\Add}{Add}
\newcommand{\supp}{\mathrm{supp}}
\newcommand{\sgn}{\mathrm{sign}}
\DeclareMathOperator{\sign}{sign}
\DeclareMathOperator{\spann}{span}
\begin{document}

\title{Two new results about quantum exact learning}

\author{Srinivasan Arunachalam}
\affiliation{IBM T. J. Watson Research Center}
\thanks{Work done while a Postdoc at Center for Theoretical Physics, MIT and PhD student at QuSoft, CWI, Amsterdam, the Netherlands. Supported by ERC Consolidator  Grant 615307 QPROGRESS and  MIT-IBM Watson AI Lab under the project {\it Machine Learning in Hilbert space}. {\tt Srinivasan.Arunachalam@ibm.com}}

\author{Sourav Chakraborty}
\affiliation{Indian Statistical Institute, Kolkata, India}
\thanks{Work done while on sabbatical at CWI, supported by ERC Consolidator Grant 615307 QPROGRESS. {\tt sourav@isical.ac.in}}

\author{Troy Lee}
\affiliation{Centre for Quantum Software and Information, University of Technology Sydney, Australia}
\thanks{Partially supported by the Australian Research Council (Grant No: DP200100950).  Part of this work was done while at the School for Physical and Mathematical Sciences, Nanyang Technological University and the Centre for Quantum Technologies, Singapore, supported by the Singapore National Research Foundation under NRF RF Award No. NRF-NRFF2013-13. {\tt troyjlee@gmail.com}}

\author{Manaswi Paraashar}
\affiliation{Indian Statistical Institute, Kolkata, India}
\thanks{{\tt manaswi.isi@gmail.com}}

\author{Ronald de Wolf}
\affiliation{QuSoft, CWI and University of Amsterdam, the Netherlands}
\thanks{Partially supported by ERC Consolidator Grant 615307-QPROGRESS (which ended February 2019), and by the Dutch Research Council (NWO) through Gravitation-grant Quantum Software Consortium 024.003.037 and through QuantERA project QuantAlgo 680-91-034. {\tt rdewolf@cwi.nl}
\newline
\newline
A conference version of this paper appeared in the proceedings of the 46th International Colloquium on Automata, Languages and Programming (ICALP 19), Leibniz International Proceedings in Informatics (LIPIcs) volume 132, pp.16:1-16:15, 2019.}





\maketitle

\begin{abstract}
  We present two new results about exact learning by quantum computers.
		First, we show how to exactly learn a $k$-Fourier-sparse $n$-bit Boolean function from $O(k^{1.5}(\log k)^2)$ uniform quantum examples for that function.
		This improves over the bound of $\widetilde{\Theta}(kn)$ uniformly random \emph{classical} examples (Haviv and Regev, CCC'15). Additionally, we provide a possible direction to improve our $\widetilde{O}(k^{1.5})$ upper bound by proving an improvement of Chang's lemma for $k$-Fourier-sparse Boolean functions. Second, we show that if a concept class $\Cc$ can be exactly learned using $Q$ quantum membership queries, then it can also be learned using $O\left(\frac{Q^2}{\log Q}\log|\Cc|\right)$ \emph{classical} membership queries. This improves the previous-best simulation result (Servedio and Gortler, SICOMP'04) by a $\log Q$-factor.
\end{abstract}
\section{Introduction}
	
	\subsection{Quantum learning theory}
	
	Both quantum computing and machine learning are hot topics at the moment, and their intersection has been receiving growing attention in recent years as well.
	On the one hand there are particular approaches that use quantum algorithms like Grover search~\cite{grover:search} and the Harrow-Hassidim-Lloyd linear-systems solver~\cite{hhl:lineq} to speed up learning algorithms for specific machine learning tasks (see~\cite{wittek:qml,schuldea:introqml,adcockea:qml,biamonteea:qml,briegel&dunjko:qml} for recent surveys of this line of work). On the other hand there have been a number of more general results about the sample and/or time complexity of learning various concept classes using a quantum computer (see~\cite{arunachalam:quantumlearningsurvey} for a survey). This paper presents two new results in the latter line of work. In both cases the goal is to \emph{exactly} learn an unknown target function with high probability; for the first result our access to the target function is through quantum examples for the function, and for the second result our access is through membership queries to the function.
	
	\subsection{Exact learning of sparse functions from uniform quantum examples}
	
	Let us first explain the setting of distribution-dependent learning from examples.
	Let $\Cc$ be a class of functions, a.k.a.~a \emph{concept class}. For concreteness assume they are $\pm 1$-valued functions on a domain of size~$N$; if $N=2^n$, then the domain may be identified with $\01^n$. Suppose $c\in\Cc$ is an unknown function (the \emph{target} function or concept) that we want to learn. A learning algorithm is given \emph{examples} of the form $(x,c(x))$, where $x$ is distributed according to some probability distribution $D$ on $[N]$. An $(\eps,\delta)$-learner for~$\Cc$ w.r.t.~$D$ is an algorithm that, for every possible target concept $c\in\Cc$, produces a hypothesis $h:[N]\to\pmset{}$ such that  with probability at least $1-\delta$ (over the randomness of the learner and the examples for the target concept~$c$), $h$'s generalization error is at most $\eps$,~i.e.,
	$$
	\Pr_{x\sim D}[c(x)\neq h(x)]\leq\eps,
	$$
	where $x\sim D$ means $x$ is sampled according to the distribution $D$. 
	In other words, from $D$-distributed examples the learner has to construct a hypothesis that mostly agrees with the target concept \emph{under the same $D$}.
	
	In the early days of quantum computing, Bshouty and Jackson~\cite{bshouty:quantumpac} generalized this learning setting by allowing coherent \emph{quantum} examples. A quantum example for concept~$c$ w.r.t.\ distribution~$D$, is the following $(\ceil{\log N}+1)$-qubit state:
	$$
	\sum_{x\in[N]}\sqrt{D(x)}\ket{x,c(x)}.
	$$
	Clearly such a quantum example is at least as useful as a classical example, because measuring this state yields a pair $(x,c(x))$ where $x\sim D$.
	Bshouty and Jackson gave examples of concept classes that can be learned more efficiently from quantum examples than from classical random examples under specific~$D$. In particular, they showed that the concept class of DNF-formulas can be learned in polynomial time from quantum examples under the \emph{uniform} distribution, something we do not know how to do classically (the best classical upper bound is quasi-polynomial time~\cite{verbeurgt:learningdnf}). The key to this improvement is the ability to obtain, from a uniform quantum example, a sample $S\sim\widehat{c}(S)^2$ distributed according to the squared \emph{Fourier coefficients} of~$c$.\footnote{Parseval's identity implies $\sum_{S\in\01^n}\widehat{c}(S)^2=1$, so this is indeed a probability distribution.}
	This \emph{Fourier sampling}, originally due to Bernstein and Vazirani~\cite{bernstein&vazirani:qcomplexity}, is very powerful. For example, if $\Cc$ is the class of $\mathbb{F}_2$-linear functions on $\01^n$, then the unknown target concept~$c$ is a character function $\chi_S(x)=(-1)^{x\cdot S}$ \footnote{The linear functions with domain $\{0,1\}^n$ and range $\{0,1\}$ are defined as $(S \cdot x) \bmod{2}$, for $S \subseteq[n]$. The definition of linear functions we give here are for functions with range $\{-1,1\}$ rather than $\{0,1\}$.}; its only non-zero Fourier coefficient is $\widehat{c}(S)$ hence one Fourier-sample gives us the unknown~$S$ with certainty. In contrast, learning linear functions from classical uniform examples requires $\Theta(n)$ examples.  Another example where Fourier sampling is proven powerful is in learning the class of $\ell$-juntas on~$n$ bits.\footnote{We say $f:\01^n\rightarrow \pmset{}$ is an \emph{$\ell$-junta} if there exists a set $S\subseteq [n]$ of size $|S|\leq \ell$ such that $f$ depends only on the variables whose indices are in~$S$.} 
	{At\i c\i} and Servedio~\cite{atici&servedio:testing} showed that $(\log n)$-juntas can be exactly learned by a quantum learner under the uniform distribution in time polynomial in~$n$. 
	Classically it is a long-standing open question if a similar result holds when the learner is given uniform classical examples (the best known algorithm runs in quasi-polynomial time~\cite{mos:learningjuntas}). These cases (and others surveyed in~\cite{arunachalam:quantumlearningsurvey}) show that uniform quantum examples (and in particular Fourier sampling) can be more useful than classical~examples.\footnote{This is not the case in Valiant's \emph{PAC-learning} model~\cite{valiant:paclearning} of distribution-independent learning. There we require the same learner to be an $(\eps,\delta)$-learner for $\Cc$ w.r.t.\ \emph{every} possible distribution~$D$. One can show in this model (and also in the broader model of \emph{agnostic} learning) that the quantum and classical sample complexities are equal up to a constant factor~\cite{arunachalam:optimalpaclearning}.}

	In this paper we consider the concept class of $n$-bit Boolean functions (with domain $\{0,1\}^n$ and range $\{-1,1\}$) that are \emph{$k$-sparse in the Fourier domain}: $\widehat{c}(S)\neq 0$ for at most $k$ different $S$'s. This is a natural generalization of the above-mentioned case of learning linear functions, which corresponds to $k=1$. It also generalizes the case of learning $\ell$-juntas on $n$ bits, which are functions of sparsity $k=2^\ell$. Variants of the class of $k$-Fourier-sparse functions have been well-studied in the area of \emph{sparse recovery}, where the goal is to recover a $k$-sparse vector $x\in \R^N$ given a low-dimensional linear sketch~$Ax$ for a so-called ``measurement matrix'' matrix $A\in \R^{m\times N}$. See~\cite{hassanieh:sparserecovery,indyk:sampleoptimal} for some upper bounds on the size of the measurement matrix that suffice for sparse recovery. Closer to the setting of this paper, there has also been extensive work on learning the concept class of $n$-bit \emph{real-valued} functions that are $k$-sparse in the Fourier domain. In this direction Cheraghchi et al.~\cite{cheraghchi:RIPlistdecoding} showed that $O(nk(\log k)^3)$ uniform examples suffice to learn this concept class, improving upon the works of Bourgain~\cite{bourgain:RIP}, Rudelson and Vershynin~\cite{rudelsyn:sparsereconstruction} and Cand\'es and Tao~\cite{candesandtao:signalrecovery}.

	
	In this paper we focus on \emph{exactly} learning the target concept from uniform examples, with high success probability. So $D(x)=1/2^n$ for all $x$, $\eps=0$, and $\delta=1/3$. Haviv and Regev~\cite{haviv:listdecoding} showed that for classical learners $O(nk\log k)$ uniform examples suffice to learn $k$-Fourier-sparse functions, and $\Omega(nk)$ uniform examples are necessary. In Section~\ref{sec:ksparse} we study the number of uniform \emph{quantum} examples needed to learn $k$-Fourier-sparse Boolean functions, and show that it is upper bounded by $O(k^{1.5}(\log k)^2)$. For $k\ll n^2$ this quantum bound is much better than the number of uniform examples used in the classical case. Proving the upper bound is done in two phases. In the first phase we use  the fact that a uniform quantum example allows us to Fourier-sample the target concept and, with some Fourier analysis of $k$-Fourier-sparse functions, we learn the Fourier span using $O(rk)$ examples, where $r$ is the Fourier dimension of the target concept (see Section~\ref{sec: Preliminaries} for the definition of Fourier dimension). %
	In the second phase, we reduce the number of variables to the dimension~$r$ of the Fourier support, and then invoke the classical learner of Haviv and Regev to learn the target function from $O(rk\log k)$ classical examples. Since it is known that $r=O(\sqrt{k}\log k)$~\cite{sanyal:fourierdim}, the two phases together imply that $O(k^{1.5}(\log k)^2)$ uniform quantum examples suffice to exactly learn the target with high probability. 
	We also prove a (non-matching) lower bound of $\Omega(k\log k)$ uniform quantum examples, using techniques from quantum information theory.

	We believe that the sample complexity for Phase~1 of our learning algorithm is actually $\tilde{O}(k)$.
	Towards that end, we propose a possible way to prove the sample complexity of our Phase~1 to $\Tilde{O}(k)$. The first step in Phase~1 of our algorithm is to obtain an $S \neq 0^n$ such that $\widehat{c}(S) \neq 0$, where $c$ is the $k$-Fourier-sparse target concept.  It follows from Chang's lemma~\cite{chang:inequality},
	a central result in additive combinatorics, that in expectation $O(k \sqrt{\log k}/\sqrt{r})$ Fourier-samples are sufficient to obtain one such $S$. In Section~\ref{sec: potential direction to improve phase 1} we present an improvement of Chang's lemma for the case of $k$-Fourier-sparse Boolean functions. Using this improvement we can show that in expectation $O((k \log k)/r)$ Fourier-samples are sufficient to obtain an $S\neq \emptyset$ such that $\widehat{c}(S) \neq 0$. We conjecture (Conjecture~\ref{conjecture : phase1}) a generalization of our improvement of Chang's Lemma which, if true, would imply that Phase~1 of our algorithm can be done in $\Tilde{O}(k)$ many expected number of samples. Our improvement of Chang's lemma and the techniques used therein might be of independent interest.
	


	\subsection{Exact learning from quantum membership queries}
	Our second result is in a model of active learning. The learner still wants to exactly learn an unknown target concept $c:[N]\to\pmset{}$ from a known concept class~$\Cc$, but now the learner can choose which points of the truth-table of the target it sees, rather than those points being chosen randomly. More precisely, the learner can query $c(x)$ for any $x$ of its choice. This is called a \emph{membership query}.\footnote{Think of the set $\{x\mid c(x)=1\}$ corresponding to the target concept: a membership query asks whether $x$ is a member of this set or not.}
	Quantum algorithms have the following query operation available:
	$$
	O_c:\ket{x,b}\mapsto\ket{x,b\cdot c(x)},
	$$
	where $b\in\pmset{}$. For some concept classes, quantum membership queries can be much more useful than classical. Consider again the class $\Cc$ of $\mathbb{F}_2$-linear functions on $\01^n$. Using one query to a uniform superposition over all $x$ and doing a Hadamard transform, we can Fourier-sample and hence learn the target concept exactly. In contrast, $\Theta(n)$ classical membership queries are necessary and sufficient for classical learners.
	As another example, consider the concept class $\Cc=\{\delta_i\mid i\in[N]\}$ of the $N$ point functions, where $\delta_i(x)=1$ iff $i=x$. Elements from this class can be learned using $O(\sqrt{N})$ quantum membership queries by Grover's algorithm, while every classical algorithm needs to make $\Omega(N)$ membership~queries.
	
	For a given concept class $\Cc$ of $\pm 1$-valued function on $[N]$, let $D(\Cc)$ denote the minimal number of classical membership queries needed for learners that can exactly identify every $c\in\Cc$ with success probability~1 (such learners are deterministic without loss of generality). Let $R(\Cc)$ and $Q(\Cc)$ denote the minimal number of classical and quantum membership queries, respectively, needed for learners that can exactly identify every $c\in\Cc$ with error probability $\leq 1/3$.\footnote{We can identify each concept with a string $c\in\pmset{N}$, and hence $\Cc\subseteq\pmset{N}$. The goal is to learn the unknown $c\in\Cc$ with high probability using few queries to the corresponding $N$-bit string. This setting is also sometimes called ``oracle identification'' in the literature; see \cite[Section~4.1]{arunachalam:quantumlearningsurvey} for more references.} Servedio and Gortler~\cite{servedio&gortler:equivalencequantumclassical} showed that these quantum and classical measures cannot be too far apart. First, using an information-theoretic argument they~showed
	$$
	Q(\Cc)\geq \Omega\left(\frac{\log|\Cc|}{\log N}\right).
	$$
	Intuitively, this holds because a learner recovers roughly $\log|\Cc|$ bits of information, while every quantum membership query can give at most $O(\log N)$ bits of information.
	Note that this is tight for the class of linear functions, where the left- and right-hand sides are both constant. Second, using the so-called hybrid method they showed
	$$
	Q(\Cc)\geq \Omega(1/\sqrt{\gamma(\Cc)}),
	$$
	for some combinatorial parameter~$\gamma(\Cc)$ that we will not define here (but which is $1/N$ for the class~$\Cc$ of point functions, hence this inequality is tight for that $\Cc$).
	They also noted the following upper bound:
	$$
	D(\Cc)=O\left(\frac{\log|\Cc|}{\gamma(\Cc)}\right).
	$$
	Combining these three inequalities yields the following relation between $D(\Cc)$ and $Q(\Cc)$
	\begin{equation}\label{eq:DQrelation}
	D(\Cc)\leq O(Q(\Cc)^2\log|\Cc|)\leq O(Q(\Cc)^3\log N).
	\end{equation}
	This shows that, up to a $\log N$-factor, quantum and classical membership query complexities of exact learning are polynomially close.
	While each of the three inequalities that together imply \eqref{eq:DQrelation} can be individually tight (for different $\Cc$), this does not imply \eqref{eq:DQrelation} itself is tight. 
	
	Note that Eq.~\eqref{eq:DQrelation} upper bounds the membership query complexity of \emph{deterministic} classical learners. We are not aware of a stronger upper bound on \emph{bounded-error} classical learners.
	However, in Section~\ref{sec:QvsD} we tighten that bound further by a $\log Q(\Cc)$-factor: 
	$$
	R(\Cc)\leq O\left(\frac{Q(\Cc)^2}{\log Q(\Cc)}\log|\Cc|\right)\leq O\left(\frac{Q(\Cc)^3}{\log Q(\Cc)}\log N\right).
	$$
	This inequality is tight both for the class of linear functions and the class of point functions.
	
	Our proof combines the quantum adversary method~\cite{ambainis:lowerboundsj,bss:semidef,spalek&szegedy:adversary} with an entropic argument to show that we can always find a query whose outcome (no matter whether it is $1$ or $-1$) will shrink the concept class by a factor $\leq 1-\frac{\log Q(\Cc)}{Q(\Cc)^2}$. While our improvement over the earlier bounds is not very large, we feel our usage of entropy to save a log-factor is new and may have applications elsewhere.

	\section{Preliminaries}
	\label{sec: Preliminaries}
	
	\paragraph{Notation.} Let $[n]=\{1,\ldots,n\}$. For an $n$-dimensional vector space, the standard basis vectors are $\{e_i\in\01^n \mid i\in [n]\}$, where $e_i$ is the vector with a $1$ in the $i$th coordinate and zeros elsewhere.  
	For $x\in \01^n$ and $i\in [n]$, let $x^i$ be the input obtained by flipping the $i$th bit in $x$.
	
	For a Boolean function $f:\01^n\rightarrow \pmset{}$ and $B\in \F_2^{n\times n}$, define $f\circ B:\01^n\rightarrow \pmset{}$ as $(f\circ B)(x):=f(Bx)$, where the matrix-vector product $Bx$ is over $\F_2$. Throughout this paper, the rank of a matrix $B\in \F_2^{n\times n}$ will be taken over $\F_2$. Let $B_1,\ldots,B_n$ be the columns of $B$. 
	
	\paragraph{Fourier analysis on the Boolean cube.} We introduce the basics of Fourier analysis here, referring to~\cite{odonnell:analysis,wolf:fouriersurvey} for more. Define the inner product between functions $f,g:\01^n\rightarrow \R$~as
	$$
	\langle f,g\rangle=\Exp_{x\in \01^n} [f(x)\cdot g(x)],
	$$ 
	where the expectation is uniform over all $x\in \01^n$. For $S\in\01^n$, the character function corresponding to $S$ is given by $\chi_S(x):=(-1)^{S\cdot x}$, where the dot product $S\cdot x$ is $\sum_{i=1}^n S_ix_i$. For every $j \in [n]$, we use the notation $\chi_j$ to denote the function $\chi_{\{j\}}$. Observe that the set of functions $\{\chi_S\}_{S\in \01^n}$ forms an orthonormal basis for the space of real-valued functions over the Boolean cube. Hence every $f:\01^n\rightarrow \R$ can be written uniquely as 
	$$
	f(x)=\sum_{S\in \01^n} \widehat{f}(S) (-1)^{S\cdot x} \quad \text{for all }x\in \01^n,
	$$ 
	where $\widehat{f}(S)=\langle f,\chi_S\rangle=\Exp_x[f(x)\chi_S(x)]$ is called a \emph{Fourier coefficient} of $f$. For $i\in [n]$, we write $\widehat{f}(e_i)$ as $\widehat{f}(i)$ for notational convenience. 
	
	Parseval's identity states that $\sum_{S\in \01^n}\widehat{f}(S)^2=\Exp_x[f(x)^2]$. If $f$ has range $\pmset{}$, then Parseval gives $\sum_{S\in \01^n}\widehat{f}(S)^2=1,$ so $\{\widehat{f}(S)^2\}_{S\in \01^n}$ forms a probability distribution. 
	The \emph{Fourier weight} of function~$f$ on $\Sh\subseteq \01^n$ is defined as $\sum_{S\in \Sh} \widehat{f}(S)^2$.
	
	For $f:\01^n\rightarrow \R$, the \emph{Fourier support} of $f$ is $\supp(\widehat{f})=\{S:\widehat{f}(S)\neq 0\}$. The \emph{Fourier sparsity} of $f$ is $|\supp(\widehat{f})|$. The \emph{Fourier span} of $f$, denoted $\Fspan(f)$, is the span of $\supp(\widehat{f})$. The \emph{Fourier dimension} of $f$, denoted $\Fdim(f)$, is the dimension of the Fourier span. We say $f$ is \emph{$k$-Fourier-sparse} if $|\supp(\widehat{f})|\leq k$.

	We now state a number of known structural results about Fourier coefficients and dimension.
	
	\begin{theorem}[\cite{sanyal:fourierdim}]
	\label{thm:fourierdimensionality}
		The Fourier dimension of a $k$-Fourier-sparse $f:\01^n\rightarrow \pmset{}$ is $O(\sqrt{k}\log k)$.\footnote{Note that this theorem is optimal up to the logarithmic factor for the addressing function $\Add_m:\01^{\log m+m}\rightarrow \pmset{}$ defined as $\Add_m(x,y)=1-2y_x$ for all $x\in \01^{\log m}$ and $y\in \01^m$, i.e., the output of $\Add_m(x,y)$ is determined by the value $y_x$, where $x$ is treated as the binary representation of a number in $\{0,\ldots,m-1\}$. For the $\Add_m$ function, the Fourier dimension is $m$ and the Fourier sparsity is~$m^2$.}
	\end{theorem}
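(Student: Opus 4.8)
The plan is to first use linear algebra over $\F_2$ to reduce to the case where $f$ has \emph{full} Fourier dimension, and then to prove the sharpened statement that a $k$-Fourier-sparse $f:\{0,1\}^r\to\pmset{}$ whose Fourier support spans all of $\F_2^r$ must satisfy $r=O(\sqrt k\log k)$. For the reduction: set $r=\Fdim(f)$, pick $v_1,\dots,v_r\in\supp(\widehat f)$ forming a basis of $\Fspan(f)$, complete to a basis of $\F_2^n$, and let $B\in\F_2^{n\times n}$ be the corresponding invertible matrix. Composing with an invertible $\F_2$-linear map only permutes the Fourier coefficients, so $f\circ B$ is again $k$-Fourier-sparse, still has Fourier dimension $r$, no longer depends on coordinates $r{+}1,\dots,n$, and can be arranged so that $\widehat{f\circ B}(e_i)\neq 0$ for every $i\in[r]$. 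Hence it suffices to treat $f:\{0,1\}^r\to\pmset{}$ with $\supp(\widehat f)$ spanning $\F_2^r$ (in particular $\widehat f(e_i)\neq0$ for all $i$) and show $r=O(\sqrt k\log k)$.

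Three ingredients then feed the main estimate. First, \emph{granularity}: a $k$-Fourier-sparse Boolean $f$ has every $\widehat f(S)\in 2^{-\kappa}\mathbb Z$ for some $\kappa$ with $2^{\kappa}\le k$, so $|\widehat f(S)|\ge 1/k$ on the support and Parseval becomes $\sum_S(2^\kappa\widehat f(S))^2=2^{2\kappa}$, a sum of at most $k$ nonzero integer squares. Second, $\|\widehat f\|_1=\sum_S|\widehat f(S)|\le\sqrt k$, by Cauchy--Schwarz and Parseval. Third, $f^2\equiv1$, which spectrally is the family of autocorrelation identities $\sum_A\widehat f(A)\widehat f(A\oplus S)=[S=0^r]$ for all $S$. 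Evaluating these at $S=e_i\oplus e_j$ ($i\neq j$) and isolating the two ``trivial'' pairs $\{e_i,e_j\}$ and $\{0,e_i\oplus e_j\}$ gives
$$
\widehat f(0)\,\widehat f(e_i\oplus e_j)\;=\;-\,\widehat f(e_i)\,\widehat f(e_j)\;-\!\!\sum_{\substack{A\in\supp(\widehat f),\ A\oplus e_i\oplus e_j\in\supp(\widehat f)\\ A\notin\{0,\,e_i,\,e_j,\,e_i\oplus e_j\}}}\!\!\!\tfrac{1}{2}\,\widehat f(A)\,\widehat f(A\oplus e_i\oplus e_j),
$$
whose first right-hand term is nonzero. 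The goal is to show that for many pairs $\{i,j\}$ the left-hand side cannot vanish, i.e.\ $e_i\oplus e_j\in\supp(\widehat f)$; since these vectors are distinct, this makes $|\supp(\widehat f)|$ quadratic in an ``effective dimension'', forcing $r=O(\sqrt k\log k)$.

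To obtain a usable lower bound on the main term $|\widehat f(e_i)\widehat f(e_j)|$ one first buckets $\supp(\widehat f)$ dyadically by coefficient magnitude: by granularity there are only $O(\log k)$ buckets, so some bucket of near-constant magnitude $\theta$ spans a subspace of dimension $d\ge r/O(\log k)$. After a further $\F_2$-linear change of variables we may assume $|\widehat f(e_i)|\approx\theta$ for $i$ in an index set $I$ with $|I|=d$, while Parseval restricted to that bucket forces $\theta\lesssim1/\sqrt d$. One then runs the displayed identity over the $\binom d2$ pairs inside $I$ and argues that the ``other-pairs'' contributions cannot, in aggregate, overwhelm the main terms $\approx\theta^2$; here one invokes $\|\widehat f\|_1\le\sqrt k$ together with the integrality from granularity to control the total cancellation, concluding that $e_i\oplus e_j\in\supp(\widehat f)$ for a constant fraction of the pairs, whence $k\gtrsim d^2$ and $r=O(\sqrt k\log k)$.

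The crux --- and the main obstacle --- is exactly this last step: the naive bound of each ``other-pairs'' sum by $\|\widehat f\|_1^2$ is far too lossy (it only yields something like $r=O(k\log k)$), so one must exploit that a given ``witness'' pair $\{A,A\oplus e_i\oplus e_j\}$ is charged to a single autocorrelation identity and carefully track how such pairs, and the support elements appearing in them, are distributed across the $\Theta(d^2)$ identities. This is where the genuine combinatorics lives and where a $\log$-factor is lost. The addressing function $\Add_m$ --- Fourier dimension $\approx m$, sparsity $m^2$, all coefficients $\pm1/m$, with every level-$2$ autocorrelation identity having $m-1$ witness pairs --- is the near-extremal example these estimates must accommodate, and confirms that the reduction together with this counting should give a bound no better than $O(\sqrt k\log k)$.
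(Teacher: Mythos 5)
The first thing to say is that the paper does not prove this statement at all: it is imported from Sanyal~\cite{sanyal:fourierdim}, so your argument has to stand entirely on its own (for what it is worth, the known proof is a restriction-based induction much closer in spirit to the paper's own proof of Theorem~\ref{thm:improvedchang} than to your convolution-counting scheme). The ingredients you assemble are individually fine: the reduction to a full-dimensional $f:\01^r\to\pmset{}$ with all $\widehat{f}(e_i)\neq 0$, granularity, $\|\widehat{f}\|_1\le\sqrt{k}$, the identities $\sum_A\widehat{f}(A)\widehat{f}(A\oplus S)=[S=0^r]$, and the dyadic bucketing giving a bucket of magnitude $\approx\theta$ spanning $d\ge r/O(\log k)$ dimensions with $\theta^2\lesssim 1/d$.

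But the proof is not there: the step you yourself call ``the crux'' --- that for a constant fraction of pairs $\{i,j\}\subseteq I$ the witness-pair sum cannot cancel the main term $\approx\theta^2$ --- is the entire content of the theorem, and the tools you allocate to it provably do not suffice. The only aggregate control available is $\sum_{s\neq 0}\sum_A|\widehat{f}(A)\widehat{f}(A\oplus s)|\le\|\widehat{f}\|_1^2\le k$, so the number of ``bad'' pairs this excludes is only $O(k/\theta^2)$; since Parseval gives merely the upper bound $\theta^2\lesssim 1/d$ (and granularity only $\theta\ge 2/k$), this allows on the order of $kd$ or more bad pairs, which swamps $\binom{d}{2}$ unless $d\gtrsim k$ --- i.e.\ the count yields nothing beyond the trivial bound $r\le k$ (indeed your own assessment that the naive bound gives ``$r=O(k\log k)$'' is weaker than that trivial bound). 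Even in the extremal example $\Add_m$, where all coefficients have magnitude $1/\sqrt{k}$, the aggregate bound permits $\sim k^2$ bad pairs against only $\sim k$ pairs in total, so it is vacuous; no charging scheme of the kind you gesture at is exhibited, and it is not evident one exists. Two further problems: (i) if $\widehat{f}(0^r)=0$ the left-hand side of your displayed identity vanishes identically, so a nonzero main term certifies only the existence of some witness pair, not $e_i\oplus e_j\in\supp(\widehat{f})$; you would need to first multiply $f$ by a character from its support (and track how that translation affects the span), which your reduction never does; (ii) the addressing function in its natural coordinates has $e_i\oplus e_j\notin\supp(\widehat{f})$ for almost all pairs, so any correct completion must exploit the $\F_2$-basis change in an essential, quantitative way, and nothing in the sketch does so. As it stands this is a plausible-looking plan whose decisive combinatorial step is missing and quite possibly unworkable with the stated estimates.
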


	\begin{lemma}[{\cite[Theorem~12]{gopalan:fouriersparsity}}]
		\label{lemma:granularityofsparse}
		Let $k\geq 2$. The Fourier coefficients of a $k$-Fourier-sparse Boolean function $f:\01^n\rightarrow \pmset{}$ are integer multiples of $2^{1-\lfloor \log k\rfloor}$.
	\end{lemma}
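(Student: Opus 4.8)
The plan is a strong induction on $k$, with restriction of $f$ to affine hyperplanes as the basic move. For the base case $k\le 3$ I would first note that $f$ must actually be $1$-Fourier-sparse: if $S_1\ne S_2$ both lie in $\supp(\widehat f)$, then equating Fourier coefficients on the two sides of $f^2\equiv 1$ at the nonzero frequency $S_1\oplus S_2$ produces a contribution $2\widehat f(S_1)\widehat f(S_2)\ne 0$, which can only be cancelled by a pair $\{S_3,S_4\}\subseteq\supp(\widehat f)$ disjoint from $\{S_1,S_2\}$ with $S_3\oplus S_4=S_1\oplus S_2$; hence $|\supp(\widehat f)|\ge 4$. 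So for $k\in\{2,3\}$ we have $f=\pm\chi_S$ and its only nonzero coefficient $\pm1$ is an integer multiple of $2^0=2^{1-\lfloor\log k\rfloor}$.

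For the inductive step fix $k\ge 4$ and set $d=\lfloor\log k\rfloor\ge 2$. Given a nonzero direction $c$, let $g_0,g_1$ be the restrictions of $f$ to the hyperplanes $\{x:c\cdot x=0\}$ and $\{x:c\cdot x=1\}$; identifying each with $\01^{n-1}$ these are Boolean, and since $\chi_S$ and $\chi_{S\oplus c}$ restrict to the same character up to sign on each hyperplane, one gets $\widehat f(S)=\pm\tfrac12\big(\widehat{g_0}([S])\pm\widehat{g_1}([S])\big)$, where $[S]$ is the image of $S$ in $\F_2^n/\{0,c\}$ and the signs absorb the identifications. Moreover $\supp(\widehat{g_b})$ sits inside the image of $\supp(\widehat f)$ under that quotient, which has size $k-p_c$, where $p_c$ counts the pairs $\{S,S\oplus c\}\subseteq\supp(\widehat f)$. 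Thus, if one can choose $c$ so that $g_0$ and $g_1$ are both $(2^d-1)$-Fourier-sparse, the induction hypothesis (together with the trivial case where some $g_b$ is a single character) gives that their coefficients are integer multiples of $2^{1-(d-1)}=2^{2-d}$, and the displayed identity then yields $\widehat f(S)\in\tfrac12\cdot 2^{2-d}\,\mathbb{Z}=2^{1-d}\,\mathbb{Z}$, as required. So everything comes down to exhibiting a direction $c$ with $p_c\ge k-2^{d}+1$.

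When $k$ is a power of two this merely asks for any $c$ with $p_c\ge 1$, which is immediate (take $c=S_1\oplus S_2$ for two frequencies in the support), so the induction closes easily. The general case is the crux, and where I expect the real work: since then $k-2^d+1\ge 2$, several pairs of the Fourier support must be collapsed simultaneously. The extra input is again that $f$ is Boolean: equating coefficients in $f^2\equiv 1$ at each nonzero frequency $c$ gives $\sum_{\{S,S\oplus c\}\subseteq\supp(\widehat f)}\widehat f(S)\widehat f(S\oplus c)=0$, so $p_c$ can never equal $1$; the Fourier support of a Boolean function therefore cannot be a Sidon set, and indeed is forced to carry a good deal of additive structure. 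Combining this with Parseval's identity $\sum_S\widehat f(S)^2=1$ and the granularity already available for the less-sparse restrictions, one argues that some direction collapses at least $k-2^d+1$ pairs. I would expect this counting/structural step, rather than the restriction bookkeeping, to be the main obstacle.
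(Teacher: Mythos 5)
You have reorganized the problem correctly up to a point: the restriction identity, the bookkeeping $|\mathrm{supp}(\widehat{g_b})|\le k-p_c$, the base case (your disjoint-pairs argument from $f^2\equiv 1$ showing sparsity $2$ or $3$ is impossible is fine), and the observation that $p_c\neq 1$ for Boolean functions are all sound. But the proof is not complete, and the missing piece is not a side issue — it is the entire content of the theorem. Your induction needs, at every non-power-of-two level, a direction $c$ with $p_c\ge k-2^{\lfloor\log k\rfloor}+1$; for $k$ close to $2^{d+1}$ this demands a single direction that pairs up all but fewer than $2^{d}$ elements of the Fourier support, i.e.\ roughly half the support collapses along one $c$. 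You offer no argument for this beyond ``$p_c$ is never $1$'' plus Parseval and a hope that counting will do the rest, and it will not in any obvious way: the $f^2\equiv 1$ constraint at frequency $c$ only tells you the signed sum $\sum\widehat f(S)\widehat f(S\oplus c)$ vanishes, which forces $p_c\in\{0\}\cup\{2,3,\dots\}$ but gives no lower bound growing with $k$. Note also what your partial progress actually buys: guaranteeing only $p_c\ge 2$ means the sparsity drops by an additive $2$ per restriction, so the recursion yields granularity of order $2^{-\Omega(k)}$, exponentially weaker than the claimed $2^{1-\lfloor\log k\rfloor}$; to get a logarithmic bound the sparsity must drop below the next power of two at \emph{every} step, which is exactly the unproved claim. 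It is not even clear that this claim holds as stated (it is a strong structural assertion about sparse Boolean functions, quite possibly as hard as, or harder than, the granularity theorem itself), so the inductive invariant may need to be changed rather than the counting sharpened.

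For calibration: the paper does not prove this lemma at all — it imports it as Theorem~12 of Gopalan, O'Donnell, Servedio, Shpilka and Wimmer~\cite{gopalan:fouriersparsity} — so there is no in-paper argument to match your outline against; you would need to consult that proof. Also, be careful with a smaller point in your write-up: even your ``easy'' power-of-two case is only easy modulo the induction hypothesis at the level $2^{d}-1$, which is a non-power of two, so the hard case cannot be quarantined. Finally, if your goal is only to support the way the lemma is used in this paper (the bound $|\widehat f(S)|\ge 1/k$ for $k$-Fourier-sparse $f$), a weaker granularity statement with $\lceil\log k\rceil$ in place of $\lfloor\log k\rfloor$ would suffice, but that weaker statement still does not follow from your single-collapse or double-collapse recursion, for the same reason.
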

	
	\begin{definition}
		\label{defn:fourierbasis}
		Let $f:\01^n \to \pmset{}$  and suppose $B\in \F_2^{n\times n}$ is invertible. Define~$f_{B}$~as 
		$$
		f_{B}(x) = f((B^{-1})^{\mathsf T}x).
		$$ 
	\end{definition}

	\begin{lemma}\label{lemma:Fouriercoeffmatrixprod}
		Let $f:\01^n\rightarrow \R$ and suppose $B\in \F_2^{n\times n}$ is invertible. Then the Fourier coefficients of $f_B$ are $\widehat{f_B}(Q)=\widehat{f}(BQ)$ for all $Q\in \01^n$.  
	\end{lemma}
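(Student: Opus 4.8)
The plan is to compute $\widehat{f_B}(Q)$ straight from the definition of a Fourier coefficient and reduce it to $\widehat{f}(BQ)$ by a linear change of variables over $\F_2$. Starting from
$$
\widehat{f_B}(Q)=\Exp_{x\in\01^n}\big[f_B(x)\chi_Q(x)\big]=\Exp_{x}\big[f\big((B^{-1})^{\mathsf T}x\big)(-1)^{Q\cdot x}\big],
$$
I would substitute $y=(B^{-1})^{\mathsf T}x$, equivalently $x=B^{\mathsf T}y$. Since $B$ is invertible over $\F_2$, so is $B^{\mathsf T}$, hence the map $x\mapsto y$ is a bijection of $\01^n$ and preserves the uniform distribution; the expectation is therefore unchanged, and we get $\widehat{f_B}(Q)=\Exp_{y\in\01^n}\big[f(y)(-1)^{Q\cdot(B^{\mathsf T}y)}\big]$.

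The second step is to rewrite the phase. Working in the exponent modulo $2$, $Q\cdot(B^{\mathsf T}y)=Q^{\mathsf T}B^{\mathsf T}y=(BQ)^{\mathsf T}y=(BQ)\cdot y$ (equivalently, unfolding the sums, $\sum_i Q_i\sum_j B_{ji}y_j=\sum_j y_j(BQ)_j$). Substituting back gives $\widehat{f_B}(Q)=\Exp_{y}\big[f(y)(-1)^{(BQ)\cdot y}\big]=\widehat{f}(BQ)$, which is the claim.

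The only place requiring a moment's care — and the closest thing to an obstacle — is the transpose bookkeeping: one must check that $f_B$ was defined with $(B^{-1})^{\mathsf T}$ precisely so that the change of variables sends $\chi_Q$ to $\chi_{BQ}$, rather than to $\chi_{B^{\mathsf T}Q}$ or $\chi_{B^{-1}Q}$. I would sanity-check this against a single character $f=\chi_S$: there $f_B(x)=(-1)^{S\cdot(B^{-1})^{\mathsf T}x}=(-1)^{(B^{-1}S)\cdot x}=\chi_{B^{-1}S}(x)$, so $\widehat{f_B}$ is supported on $B^{-1}S$ while $\widehat f$ is supported on $S$, and indeed $\widehat{f_B}(Q)=\widehat f(BQ)$ is consistent with this since $\widehat{f_B}(B^{-1}S)=1=\widehat f(S)=\widehat f(B\cdot B^{-1}S)$. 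With the convention confirmed, the two displayed steps above constitute the full proof.
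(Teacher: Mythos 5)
Your proof is correct and is essentially the same argument as the paper's: both hinge on the $\F_2$ transpose identity $Q\cdot(B^{\mathsf T}y)=(BQ)\cdot y$ (equivalently $S\cdot((B^{-1})^{\mathsf T}x)=(B^{-1}S)\cdot x$) together with a relabeling by an invertible linear map. The only cosmetic difference is direction: the paper substitutes into the Fourier expansion of $f$ and reads off $\widehat{f_B}$ by matching coefficients, while you compute $\widehat{f_B}(Q)$ directly from the definition via a measure-preserving change of variables --- the same computation viewed from the analysis rather than the synthesis side.
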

	
	\begin{proof}
		Write out the Fourier expansion of $f_B$:
		\begin{align*}
		    f_B(x) 
		    = f((B^{-1})^\mathsf{T}x) 
		    =\sum_{S\in \01^n}\widehat{f}(S)(-1)^{(B^{-1}S)\cdot x} 
		    =\sum_{Q\in \01^n}\widehat{f}(BQ)(-1)^{Q\cdot x},
		\end{align*}
		where the second equality used $\langle S,(B^{-1})^\mathsf{T}x\rangle =\langle B^{-1} S,x\rangle$ and the last used the substitution $S=BQ$.
	\end{proof}
	
	The following lemma (Lemma~\ref{lemma:basis}) easily follows by applying Lemma~\ref{lemma:Fouriercoeffmatrixprod} with an invertible linear map $B$ that maps $e_i$ to $B_i$, for every $i \in [r]$.
	
	\begin{lemma}\label{lemma:basis} 
		Let $f:\01^n\rightarrow \pmset{}$, and $B\in \F_2^{n\times n}$ be an invertible matrix such that the first $r$ columns of~$B$ are a 
		basis of the Fourier span of $f$, and $\widehat{f}(B_1), \ldots, \widehat{f}(B_r)$ are non-zero. Then 
		\begin{enumerate} 
			\item The Fourier span of $\widehat{f_{B}}$ is spanned by $\{e_1, \dots, e_r\}$, i.e., $f_{B}$ has only $r$ influential variables. 
			\item For every $i \in [r]$, $\widehat{f_{B}}(i) \neq 0$. 
		\end{enumerate}
	\end{lemma}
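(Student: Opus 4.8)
The plan is to reduce everything to the change-of-basis formula from Lemma~\ref{lemma:Fouriercoeffmatrixprod}, which tells us $\widehat{f_B}(Q)=\widehat{f}(BQ)$ for every $Q\in\01^n$. Reading this as an equality of supports, $\supp(\widehat{f_B})=B^{-1}\!\left(\supp(\widehat{f})\right)$, so it suffices to understand how $B^{-1}$ acts on the Fourier support of $f$. Since $B_i=Be_i$ we have $B^{-1}B_i=e_i$, and the hypothesis says $B_1,\dots,B_r$ span $\Fspan(f)\supseteq\supp(\widehat{f})$ over $\F_2$; hence every $S\in\supp(\widehat{f})$ can be written as $S=\sum_{i\in T}B_i$ for some $T\subseteq[r]$, whence $B^{-1}S=\sum_{i\in T}e_i\in\spann\{e_1,\dots,e_r\}$. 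This gives $\supp(\widehat{f_B})\subseteq\spann\{e_1,\dots,e_r\}$, and therefore $\Fspan(f_B)\subseteq\spann\{e_1,\dots,e_r\}$ as well.

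For the second item I would just compute directly: for $i\in[r]$, $\widehat{f_B}(i)=\widehat{f_B}(e_i)=\widehat{f}(Be_i)=\widehat{f}(B_i)$, which is non-zero by assumption. This also closes the loop on the first item, because it shows $e_i\in\supp(\widehat{f_B})\subseteq\Fspan(f_B)$ for each $i\in[r]$, so $\spann\{e_1,\dots,e_r\}\subseteq\Fspan(f_B)$, and combined with the inclusion of the previous paragraph we get $\Fspan(f_B)=\spann\{e_1,\dots,e_r\}$. Finally, the ``only $r$ influential variables'' phrasing follows from $\Inf_j(f_B)=\sum_{S:\,S_j=1}\widehat{f_B}(S)^2$: for $j>r$ no $S$ in the support has $S_j=1$, so $\Inf_j(f_B)=0$, while for $j\le r$ we have $\Inf_j(f_B)\ge\widehat{f_B}(e_j)^2>0$.

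There is no real obstacle here; the argument is a short unwinding of definitions. The only point that needs a little care is keeping the $\F_2$-linear algebra straight — in particular that the transpose-inverse in Definition~\ref{defn:fourierbasis} is exactly what makes Lemma~\ref{lemma:Fouriercoeffmatrixprod} come out as $\widehat{f_B}(Q)=\widehat{f}(BQ)$ rather than with some $B^{\mathsf T}$, and that $\supp(\widehat{f})$ really is contained in (not merely spanned by) the $\F_2$-span of $B_1,\dots,B_r$, so that the substitution $S=\sum_{i\in T}B_i$ is legitimate. Once those are noted, both conclusions drop out immediately.
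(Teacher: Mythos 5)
Your proof is correct and follows exactly the route the paper intends: the paper simply records this lemma as an ``easy consequence'' of Lemma~\ref{lemma:Fouriercoeffmatrixprod}, and your argument is the straightforward unwinding of that consequence ($\widehat{f_B}(Q)=\widehat{f}(BQ)$ applied to the support and to $Q=e_i$). No gaps; the $\F_2$-linear-algebra points you flag are handled correctly.
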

	
	Here is the well-known fact, already mentioned in the introduction, that one can Fourier-sample from uniform quantum examples:
	
	\begin{lemma}\label{lemma:fouriersampleusingexamples}
		Let $f:\01^{n}\rightarrow \pmset{}$. There exists a procedure that uses one uniform quantum example and satisfies the following: with probability $1/2$ it outputs an $S$ drawn from the distribution $\{\widehat{f}(S)^2\}_{S\in \01^n}$, otherwise it rejects.
	\end{lemma}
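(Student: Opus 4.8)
The plan is to implement the standard Fourier-sampling routine on the given quantum example. We encode the value $f(x)\in\pmset{}$ held in the last register as a single qubit, say $\ket{0}$ for $+1$ and $\ket{1}$ for $-1$, so that the state we are handed is $\frac{1}{\sqrt{2^n}}\sum_{x\in\01^n}\ket{x}\ket{f(x)}$. First I would apply a Hadamard gate $H$ to the last qubit. Since $H\ket{b}=\frac{1}{\sqrt 2}(\ket 0+(-1)^b\ket 1)$ and $(-1)^b$ equals $f(x)$ under this encoding, the state becomes
$$
\frac{1}{\sqrt{2^{n+1}}}\left(\sum_{x\in\01^n}\ket{x}\ket{0}+\sum_{x\in\01^n}f(x)\ket{x}\ket{1}\right).
$$

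Next I would measure the last qubit in the computational basis. Because $f$ is $\pm1$-valued, $\sum_{x}f(x)^2=2^n$, so both branches have squared norm exactly $1/2$; the procedure \emph{rejects} when the outcome is $0$, and when the outcome is $1$ --- which happens with probability $1/2$ --- the first $n$ qubits are left in the state $\frac{1}{\sqrt{2^n}}\sum_{x\in\01^n}f(x)\ket{x}$. Finally I would apply $H^{\otimes n}$ to these qubits; expanding and using $\widehat f(S)=\Exp_x[f(x)(-1)^{S\cdot x}]$ this produces $\sum_{S\in\01^n}\widehat f(S)\ket{S}$, and measuring in the computational basis outputs $S$ with probability $|\widehat f(S)|^2=\widehat f(S)^2$, as claimed.

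I do not expect any real obstacle: the whole argument is a short, direct calculation. The only two places where care is needed are (i) choosing the qubit encoding of the $\pm1$-valued register so that the Hadamard produces the phase $f(x)$ itself (rather than $(-1)^{(1-f(x))/2}$), and (ii) checking --- via $f(x)^2=1$ for every $x$ --- that the ``accept'' outcome of the intermediate measurement has probability exactly $1/2$; given these, the Fourier-coefficient identity finishes the proof.
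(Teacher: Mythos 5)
Your proposal is correct and matches the paper's proof essentially step for step: encode the $\pm1$ value as a bit, Hadamard the last qubit and measure (rejecting on the uniform branch, which occurs with probability $1/2$), then apply $H^{\otimes n}$ to the remaining state $\frac{1}{\sqrt{2^n}}\sum_x f(x)\ket{x}$ and measure to obtain $S$ with probability $\widehat{f}(S)^2$. The two points you flag (the sign encoding and the exact $1/2$ acceptance probability via $f(x)^2=1$) are handled the same way in the paper.
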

	
	\begin{proof}
		Using a uniform quantum example $\frac{1}{\sqrt{2^n}}\sum_x \ket{x,f(x)}$, one can obtain $\frac{1}{\sqrt{2^n}}\sum_x f(x) \ket{x}$ with probability~$1/2$: replace $f(x)\in\pmset{}$ by $(1-f(x))/2\in\01$ unitarily, apply the Hadamard transform to the last qubit and measure it. With probability $1/2$ we obtain the outcome~0, in which case our procedure rejects. Otherwise the remaining state is $\frac{1}{\sqrt{2^n}}\sum_x f(x) \ket{x}$. Apply Hadamard transforms to all $n$ qubits to obtain $\sum_S \widehat{f}(S)\ket{S}$. Measuring this quantum state gives an $S$ with probability~$\widehat{f}(S)^2$. 
	\end{proof}
	
	\paragraph{Information theory.}
	We refer to~\cite{cover&thomas:infoth} for a comprehensive introduction to classical information theory, and here just remind the reader of the basic definitions. A random variable $\mathbf{A}$ with probabilities $\Pr[\mathbf{A}=a]=p_a$ has \emph{entropy} $H(\mathbf{A}):=-\sum_a p_a\log(p_a)$. For a pair of (possibly correlated) random variables $\mathbf{A},\mathbf{B}$, the \emph{conditional entropy} of $\mathbf{A}$ given $\mathbf{B}$, is $H(\mathbf{A}\mid \mathbf{B}):=H(\mathbf{A},\mathbf{B})-H(\mathbf{B})$. This equals $\Exp_{b\sim \mathbf{B}}[H(\mathbf{A}\mid \mathbf{B}=b)]$.
	The \emph{mutual information} between $\mathbf{A}$ and $\mathbf{B}$ is $I(\mathbf{A}:\mathbf{B}):=H(\mathbf{A})+H(\mathbf{B})-H(\mathbf{A},\mathbf{B})=H(\mathbf{A})-H(\mathbf{A}\mid \mathbf{B})$.
	The \emph{binary entropy} $H(p)$ is the entropy of a bit with distribution $(p,1-p)$. If $\rho$ is a density matrix (i.e., a trace-1 positive semi-definite matrix), then its singular values form a probability distribution~$P$, and the \emph{von Neumann entropy} of $\rho$ is~$S(\rho):=H(P)$.
	We refer to~\cite[Part~III]{nielsen&chuang:qc} for a more extensive introduction to quantum information theory.

	\section{Exact learning of $k$-Fourier-sparse functions}\label{sec:ksparse}
	
	In this section we consider exactly learning the concept class~$\Cc$ of $k$-Fourier-sparse Boolean functions: 
	$$
	\Cc=\{f:\01^n\rightarrow \pmset{}:|\supp(\widehat{f})|\leq k\}.
	$$   
	The goal is to exactly learn $c\in \Cc$ given \emph{uniform examples} from $c$ of the form $(x,c(x))$  where $x$ is drawn from the uniform distribution on $\01^n$. Haviv and Regev~\cite{haviv:listdecoding} considered learning this concept class and showed the following results.
	
	
	\begin{theorem}[Corollary~3.6 of \cite{haviv:listdecoding}]\label{thm:havivregevub}
		For every $n>0$ and $k\leq 2^n$, the number of uniform examples that suffice to learn $\Cc$ with probability $1-2^{-\Omega(n\log k)}$ is $O(nk\log k)$.
	\end{theorem}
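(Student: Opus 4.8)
The plan is to use the classical ``Occam's razor'' / consistent-hypothesis argument, but with a much finer accounting than a bare union bound over $\Cc$. The learner draws $m$ uniform examples $(x_1,c(x_1)),\dots,(x_m,c(x_m))$ and outputs any $k$-Fourier-sparse Boolean $h$ that is consistent with all of them (such an $h$ exists, since $c$ itself qualifies); the learner errs only if some $g\in\Cc\setminus\{c\}$ is consistent with all $m$ examples, so
$$
\Pr[\text{fail}]\;\le\;\sum_{g\in\Cc\setminus\{c\}}\bigl(1-\Delta(c,g)\bigr)^m,\qquad \Delta(c,g):=\Pr_x[c(x)\ne g(x)].
$$
Because $\Delta(c,g)$ can be as small as $\Theta(1/k)$, bounding the right-hand side by $|\Cc|\cdot(1-\min_g\Delta(c,g))^m$ is far too lossy. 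Instead I would bucket the concepts $g$ by the dyadic scale of $\Delta(c,g)$ and bound, per scale, how many $g$ sit at that distance from~$c$.

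This needs two structural facts. The first is a \emph{minimum-distance} bound: distinct $k$-Fourier-sparse Boolean $c,g$ satisfy $\Delta(c,g)\ge 1/(2k)$. This one is easy --- $\phi:=(c-g)/2$ is nonzero, $\{-1,0,1\}$-valued, and $2k$-Fourier-sparse (its Fourier support lies in $\supp(\widehat c)\cup\supp(\widehat g)$), so the Fourier uncertainty principle $|\supp(\phi)|\cdot|\supp(\widehat\phi)|\ge 2^n$ gives $\Delta(c,g)\cdot 2^n\cdot 2k\ge 2^n$ --- and it means the sum above has only $O(\log k)$ nonempty scales, from $\tau\approx 1/(2k)$ up to $\tau\approx 1$. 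The second, which is the technical heart, is a \emph{list-decoding size} bound: for every Boolean $c$ and every $\tau\in(0,\tfrac12]$,
$$
\#\{g\in\Cc:\Delta(c,g)\le\tau\}\;\le\;2^{O(\tau\,nk\log k)}.
$$
For $\tau=\Theta(1)$ this is weaker than the trivial $|\Cc|\le 2^{O(nk)}$, so the content is in the regime $\tau\in[\Omega(1/k),O(1/\log k)]$. Any $g_1,\dots,g_t$ counted here are pairwise $2\tau$-close by the triangle inequality, so this amounts to bounding the number of $k$-Fourier-sparse Boolean functions inside a Hamming ball of radius $2\tau$; I would prove it following the recursive Fourier-analytic argument of \cite{haviv:listdecoding}, which (roughly) restricts to a carefully chosen subspace to cut the dimension while keeping the sparsity and the distance under control (Lemma~\ref{lemma:granularityofsparse} on the granularity of sparse Boolean functions is useful here). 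Driving the exponent down to $O(\tau\,nk\log k)$, rather than a larger power of $k$, is the main obstacle, and it is exactly what makes the final sample bound $O(nk\log k)$ and not something like $O(nk^2)$.

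Given these two ingredients the rest is routine. Grouping the concepts by $\tau=2^{-i}$ and using the trivial bound $|\Cc|\le 2^{O(nk)}$ at the coarsest scales, the union bound becomes at most $\sum_{i=0}^{\ceil{\log(2k)}} 2^{O(2^{-i}nk\log k)}\,(1-2^{-i-1})^m$, a sum of $O(\log k)$ terms. Taking $m=C\,nk\log k$ for a large enough constant $C$, the $i$th term is at most $2^{-\Omega(2^{-i}nk\log k)}$; the largest of these occurs at the finest scale $i=\ceil{\log(2k)}$, where it is $2^{-\Omega(n\log k)}$, so summing the $O(\log k)$ terms leaves the total at $2^{-\Omega(n\log k)}$ --- which is the claimed failure probability.
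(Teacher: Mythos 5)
There is nothing in the paper to compare against here: Theorem~\ref{thm:havivregevub} is imported verbatim as Corollary~3.6 of \cite{haviv:listdecoding} and used as a black box in phase~2; the paper gives no proof of it. Judged on its own, your reduction is the same route Haviv and Regev themselves take to get their Corollary~3.6 from their main theorem: a consistent-hypothesis (Occam) learner, the minimum-distance bound $\Delta(c,g)\ge 1/(2k)$ via the uncertainty principle (your derivation of this is correct), and a union bound stratified over dyadic distance scales. The bookkeeping also checks out: with $m=Cnk\log k$, the scale-$2^{-i}$ bucket contributes at most $2^{O(2^{-i}nk\log k)}e^{-\Omega(2^{-i}m)}=2^{-\Omega(2^{-i}nk\log k)}$ for $C$ large, the worst case being the finest scale $2^{-i}=\Theta(1/k)$ where this is $2^{-\Omega(n\log k)}$, and summing the $O(\log k)$ buckets preserves that bound. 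Note also that the claimed list-decoding bound would be essentially tight at $\tau\approx 1/k$, as witnessed by the indicators of codimension-$\log k$ subspaces used in the paper's own lower bound (Theorem~\ref{thm:quantumlooselowerbound}), so the form $2^{O(\tau nk\log k)}$ you posit is the right one.

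The genuine gap is that this list-decoding size bound --- that the number of $k$-Fourier-sparse Boolean functions within distance $\tau$ of any fixed function is $2^{O(\tau nk\log k)}$ --- is precisely the main theorem of \cite{haviv:listdecoding}, and you do not prove it; you only point to ``the recursive Fourier-analytic argument'' of that paper. Everything you do prove (minimum distance, the trivial $2^{O(nk)}$ count via Lemma~\ref{lemma:granularityofsparse}, the stratified union bound) is the routine part, while the entire content of the $O(nk\log k)$ sample bound, as you yourself note, lives in that one estimate. So as a standalone proof the proposal is incomplete at exactly the step where the difficulty sits; relative to the present paper, which cites the full statement, your outline is a faithful reconstruction of how the cited result is derived, but it does not replace the citation.
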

	
	\begin{theorem}[Theorem~3.7 of \cite{haviv:listdecoding}]
		For every $n>0$ and $k\leq 2^n$, the number of uniform examples necessary to learn $\Cc$ with constant success probability is $\Omega(k(n-\log k))$.
	\end{theorem}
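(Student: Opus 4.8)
The plan is an information-theoretic (Fano-type) lower bound built on a carefully chosen hard distribution $\mathcal D$ over $k$-Fourier-sparse functions. The point is that a naive counting bound — one uniform example carries at most one bit, so the number of examples is at least $\log$ of the size of any hard subfamily — only gives $\Omega((\log k)(n-\log k))$ here; to gain the extra factor of $k$ one must make the individual examples \emph{almost useless}, which is exactly what a heavily biased hard family achieves. So I would pick $\mathcal D$ so that (i) $H(\mathcal D)=\Omega((\log k)(n-\log k))$, yet (ii) each uniform example is only $O((\log k)/k)$ bits informative; the two facts together force $\Omega(k(n-\log k))$ examples.

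Concretely, assume $k\ge 2$ and put $d=\lfloor\log k\rfloor$, so $k/2\le 2^d\le k$ and $d\in[\tfrac12\log k,\log k]$. Let $\mathcal D$ be uniform over the $\pm1$-indicator functions $\psi_A(x)=1-2\cdot\mathbb{1}[x\in A]$ of affine subspaces $A\subseteq\F_2^n$ of codimension $d$. First I would verify $\psi_A\in\Cc$: writing $A=\{x:Lx=b\}$ with $L\in\F_2^{d\times n}$ of full row rank and expanding $\mathbb{1}[x\in A]=2^{-d}\prod_{i=1}^d(1+(-1)^{(Lx)_i+b_i})$ shows $\supp(\widehat{\psi_A})$ lies in the rowspan of $L$, of size $2^d\le k$. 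Second — the crucial point — by transitivity of the affine group on $\F_2^n$, for \emph{every} fixed $x_0$ we have $\Pr_{\psi_A\sim\mathcal D}[\psi_A(x_0)=-1]=|A|/2^n=2^{-d}$, so a single example $(x,\psi_A(x))$ carries at most $H(2^{-d})=O(d\,2^{-d})=O((\log k)/k)$ bits about the target. Third, a Gaussian-binomial estimate gives at least $2^{d(n-d)}$ affine subspaces of codimension $d$, so $H(\mathcal D)\ge d(n-d)=\Omega((\log k)(n-\log k))$.

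Given this, the argument is routine. Let the target be $c\sim\mathcal D$ and let $\Pi=((x_1,c(x_1)),\dots,(x_T,c(x_T)))$ be the transcript of a learner reading $T$ uniform examples. Since the $x_i$ are i.i.d.\ uniform and independent of $c$, the chain rule for mutual information gives $I(c:\Pi)=\sum_{i=1}^T I(c:c(x_i)\mid x_{\le i},c(x_{<i}))\le\sum_{i=1}^T H(c(x_i)\mid x_i)=T\cdot H(2^{-d})$. On the other hand, a learner that exactly identifies every $c\in\Cc$ with probability $\ge 2/3$ does so on average over $\mathcal D$, so Fano's inequality (through the data-processing chain $c\to\Pi\to\hat c$) gives $I(c:\Pi)\ge\frac23\log|\mathrm{supp}(\mathcal D)|-1=\Omega((\log k)(n-\log k))$ in the meaningful regime $n-\log k=\Omega(1)$. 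Combining the two bounds, $T\ge\Omega\!\big((\log k)(n-\log k)/H(2^{-d})\big)=\Omega\!\big((\log k)(n-\log k)/((\log k)/k)\big)=\Omega(k(n-\log k))$. The edge case $k=1$ is separate: $\Cc=\{\pm\chi_S\}$ has $2^{n+1}$ members, each example is $\le 1$ bit, and Fano gives $T=\Omega(n)=\Omega(k(n-\log k))$.

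The main obstacle is not any one computation but choosing the \emph{right} hard family: it must be simultaneously large (entropy $\Omega((\log k)(n-\log k))$) and biased enough that examples are worth only $O((\log k)/k)$ bits. High-codimension affine subspace indicators do both, and the delicate step is ensuring the $2^{-d}$ bias is the \emph{same at every input} — otherwise $H(c(x_i)\mid x_i)$ would be inflated at special inputs (e.g.\ $x_0=0$ for linear subspaces) and the $1/k$ saving would be lost; this is precisely where working with affine rather than linear subspaces, and invoking transitivity of the affine group, is needed. A secondary point is bookkeeping the constants: one restricts to $n-\log k=\Omega(1)$ (the bound is vacuous otherwise) and uses $d=\lfloor\log k\rfloor$ to keep $2^d=\Theta(k)$ and $n-d=\Theta(n-\log k)$.
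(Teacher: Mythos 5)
Your proposal is correct, but note that the paper does not actually prove this statement: it is quoted from Haviv and Regev (Theorem~3.7 of their paper), and their argument is a combinatorial/counting one showing that too few uniform samples leave many subspace-indicator functions consistent with the data. What you give instead is a self-contained Fano-type proof, and it is essentially the classical analogue of the technique this paper uses for its \emph{quantum} lower bound (Theorem~\ref{thm:quantumlooselowerbound}): the same hard family of codimension-$(\log k)$ subspace indicators, the same three-step structure (Fano lower bound on $I(c:\Pi)$, chain rule, per-example information bound), with your bound $H(c(x_i)\mid x_i)=O((\log k)/k)$ playing the role of the paper's von Neumann entropy bound $S(\rho)=O(n/k)$ per quantum example; since a classical example carries only $O((\log k)/k)$ bits rather than $O(n/k)$, you recover the stronger classical bound $\Omega(k(n-\log k))$. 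All the individual steps check out: the sparsity of $\psi_A$, the count of affine subspaces, the chain-rule decomposition with $I(c:x_i\mid\text{past})=0$, and the conclusion. One small correction: your claim that passing to \emph{affine} subspaces is essential is misplaced. For the linear family $c_V$ (the one the paper uses), the only special input is $x_0=0^n$, where $c_V(0^n)=-1$ deterministically, so $H(c(x_0))=0$ there --- the conditional entropy is not inflated but vanishes --- and at every $x_0\neq 0^n$ one has $\Pr[x_0\in V]=(2^{n-d}-1)/(2^n-1)\leq 2^{-d}$, so $H(c(x_i)\mid x_i)=O((\log k)/k)$ holds for linear subspaces as well. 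The affine choice is harmless and your proof goes through, but it buys you only cosmetic uniformity, not correctness.
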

	
	Our main results in this section are about the number of uniform \emph{quantum} examples that are necessary and sufficient to exactly learn the class $\Cc$ of $k$-Fourier-sparse functions. A uniform quantum example for a concept $c\in \Cc$ is the quantum state  
	$$
	\frac{1}{\sqrt{2^n}}\sum_{x\in \01^n}\ket{x,c(x)}.
	$$
	
	Our first theorem of this section (Section~\ref{sec: Upper bound on learning k-Fourier-sparse Boolean functions}) gives an upper bound on the number of uniform quantum examples that are sufficient to learn $\mathcal{C}$ by giving a learning algorithm.
	\begin{theorem}\label{thm:ourresultupperbound}
		For every $n>0$ and $k\leq 2^n$, the number of uniform quantum examples that suffice to~learn~$\Cc$ with probability $\geq 2/3$ is $O(k^{1.5}(\log k)^2)$.
	\end{theorem}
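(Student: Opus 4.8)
\section*{Proof proposal for Theorem~\ref{thm:ourresultupperbound}}

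The plan is to split the learner into two phases: a \emph{quantum} phase that uses Fourier sampling to recover the subspace $\Fspan(c)$, followed by a \emph{classical} phase that, after a change of basis collapsing $c$ to a function of only $r:=\Fdim(c)$ variables, finishes with the learner of Haviv and Regev (Theorem~\ref{thm:havivregevub}). By Theorem~\ref{thm:fourierdimensionality} we have $r=O(\sqrt{k}\log k)$, so the classical learner on $r$ variables costs $O(rk\log k)=O(k^{1.5}(\log k)^2)$ examples, each of which is simulable from a single quantum example; the point of the first phase (and of the improved Chang's lemma) is to show that learning $\Fspan(c)$ is no more expensive than this.

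First I would carry out Phase~1. By Lemma~\ref{lemma:fouriersampleusingexamples}, each uniform quantum example yields, with probability $1/2$, an independent sample $S\sim\widehat{c}(\cdot)^2$, so $O(k(\log k)^2)$ quantum examples produce $\Omega(k(\log k)^2)$ such Fourier samples with overwhelming probability. I would maintain the span $V$ of the samples seen so far; since every sampled $S$ lies in $\supp(\widehat{c})$, we always have $V\subseteq\Fspan(c)$, and I need to show that after $O(k(\log k)^2)$ samples $V=\Fspan(c)$ with probability, say, at least $5/6$. This is exactly where Theorem~\ref{thm:improvedchang}, combined with the granularity bound of Lemma~\ref{lemma:granularityofsparse}, is used: it bounds how much Fourier weight of $c$ can lie inside a proper subspace $U$ that does not already contain $\Fspan(c)$, which in turn lower-bounds the probability that a fresh sample enlarges $V$; a coupon-collector-style argument together with $\Fdim(c)=O(\sqrt k\log k)$ then yields the $O(k(\log k)^2)$ bound. (Morally, if the weight of $c$ outside $U$ were tiny, then a discrete derivative $(c(x)-c(x+z))/2$ for a suitable $z\in U^\perp$ would be a Fourier-sparse, low-weight, $\{-1,0,1\}$-valued function, hence identically zero by granularity, forcing $\Fspan(c)\subseteq U$.) At the end of Phase~1 the sampled elements themselves contain a basis $B_1,\dots,B_r$ of $\Fspan(c)$ with each $\widehat c(B_i)\ne 0$, which is what the next phase needs.

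Then I would do Phase~2. Extend $B_1,\dots,B_r$ to an invertible $B\in\F_2^{n\times n}$ and pass to $c_B$ as in Definition~\ref{defn:fourierbasis}. By Lemma~\ref{lemma:basis}, $c_B$ depends only on its first $r$ variables and has $\widehat{c_B}(i)\neq 0$ for each $i\in[r]$, and by Lemma~\ref{lemma:Fouriercoeffmatrixprod} it is still $k$-Fourier-sparse. Crucially, a uniform example $(x,c(x))$ for $c$ is mapped by the fixed invertible linear map $x\mapsto B^{\mathsf T}x$ to the pair $(B^{\mathsf T}x,\,c_B(B^{\mathsf T}x))=(B^{\mathsf T}x,\,c(x))$, a uniform example for $c_B$; dropping the $n-r$ irrelevant coordinates, this is a uniform example for $\widetilde c:\01^r\to\pmset{}$, $\widetilde c(z):=c_B(z,0^{\,n-r})$, which is $k$-Fourier-sparse. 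I would then run the Haviv--Regev learner (Theorem~\ref{thm:havivregevub} applied with the role of "$n$" played by $r$) on $O(rk\log k)$ such examples; it outputs $\widetilde c$, hence $c_B$, hence $c$ (since $c(x)=c_B(B^{\mathsf T}x)=\widetilde c\bigl((B^{\mathsf T}x)_{[r]}\bigr)$), with probability $1-2^{-\Omega(r\log k)}$, which exceeds $5/6$ once $k$ is at least a constant; the finitely many remaining small-$k$ cases are handled by the same argument together with standard success-probability amplification. Using $r=O(\sqrt k\log k)$, Phase~2 consumes $O(k^{1.5}(\log k)^2)$ examples, so adding Phase~1 and taking a union bound over the two failure probabilities of at most $1/6$ gives a learner that succeeds with probability $\ge 2/3$ from $O(k^{1.5}(\log k)^2)$ uniform quantum examples.

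The main obstacle is the Phase~1 analysis: extracting from Theorem~\ref{thm:improvedchang} a clean lower bound on the per-sample probability of enlarging the current span, and running the coupon-collector argument so that $O(k(\log k)^2)$ samples suffice to recover \emph{all} of $\Fspan(c)$ with constant probability --- rather than the much weaker $\widetilde O(k^2)$ that one gets from merely using that every nonzero Fourier coefficient of a $k$-Fourier-sparse Boolean function has weight $\Omega(1/k^2)$. Phase~2, by contrast, is essentially bookkeeping around the change-of-basis lemmas and the Haviv--Regev theorem, plus the observation that a linear change of variables on the examples is free.
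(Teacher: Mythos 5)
Your proposal is correct and follows essentially the same route as the paper: Phase~1 uses Fourier sampling plus the improved Chang's lemma (Theorem~\ref{thm:improvedchang}) to lower-bound the Fourier weight outside any proper subspace of the Fourier span, yielding a coupon-collector bound of $O(k\log k\log r)=O(k(\log k)^2)$ samples to learn $\Fspan(c)$, and Phase~2 changes basis to $r=O(\sqrt{k}\log k)$ variables and invokes Haviv--Regev with $O(rk\log k)$ classical examples obtained from the (transformed) uniform examples. The step you flag as the main obstacle is exactly the paper's Lemma~\ref{lem:phase1}, proved by fixing the first $r'$ coordinates via an averaging argument and applying Theorem~\ref{thm:improvedchang} to the restricted function.
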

	The learning algorithm has two phases: Phase~1 is described in Section~\ref{sec: Phase 1: Learning the Fourier span} and Phase~2 is discussed in Section~\ref{sec: Phase 2: Learning the function completely}.

	In the theorem below (Section~\ref{sec: Lower bound  on learning k-Fourier-sparse Boolean functions}) we prove the following (non-matching) lower bound on the number of uniform quantum examples necessary to learn~$\Cc$.
	
	\begin{theorem}
	\label{thm:ourresult lower bound}
		For every $n>0$, constant $c\in (0,1)$ and $k\leq 2^{cn}$, the number of uniform quantum examples necessary to learn~$\Cc$ with constant success probability is $\Omega(k\log k)$.
	\end{theorem}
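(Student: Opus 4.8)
The plan is to prove the lower bound by an information–theoretic argument of the type used for quantum sample–complexity bounds, after first reducing to small dimension. Since $c<1$ forces $n>\log k$, it suffices to prove the bound for the smallest admissible dimension: a uniform quantum example for a function on $\01^{n_0}$ can be converted into a uniform quantum example for the same function regarded on $\01^{n}$ (tensor on a uniform superposition over the $n-n_0$ dummy variables and reorder registers), so any $T$‑example learner in dimension $n$ yields a $T$‑example learner in every smaller admissible dimension; hence I may assume $n=\Theta(\log k)$. Then I fix a sub‑family $\Fe\subseteq\Cc$, closed under $c\mapsto -c$, and apply Fano's inequality: a learner that identifies a uniformly random $c\in\Fe$ with probability $\ge 2/3$ must extract $\Omega(\log|\Fe|)$ bits about $c$.

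To bound what $T$ examples reveal, I would use the decomposition (apply a Hadamard to every qubit of $\ket{\psi_c}$, and set $\ket{\widehat c}:=\sum_S\widehat c(S)\ket S$)
$$
\ket{\psi_c}=\tfrac1{\sqrt2}\ket{u}\ket{+}+\tfrac1{\sqrt2}\ket{\widehat c}\ket{-},
$$
where $\ket u$ is the uniform superposition and the last qubit is the label; the first branch is $c$‑independent. With $\sigma:=\Exp_{c\in\Fe}\bigl[\ketbra{\widehat c}{\widehat c}\bigr]$, the cross terms in $\Exp_c\ketbra{\psi_c}{\psi_c}$ vanish (since $\Exp_c\ket{\widehat c}=0$), so $\Exp_c\ketbra{\psi_c}{\psi_c}=\tfrac12\ketbra uu\otimes\ketbra{+}{+}+\tfrac12\,\sigma\otimes\ketbra{-}{-}$, which has von Neumann entropy $1+\tfrac12 S(\sigma)$. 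Holevo's bound together with subadditivity of von Neumann entropy then shows that $T$ copies reveal at most $T\bigl(1+\tfrac12 S(\sigma)\bigr)$ bits, so $T=\Omega\!\bigl(\log|\Fe|/(1+S(\sigma))\bigr)$.

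The heart of the proof is to choose $\Fe$ so that this ratio is $\Omega(k\log k)$: $\Fe$ must be large while the Fourier–coefficient vectors $\ket{\widehat c}$ stay clustered (small $S(\sigma)$), i.e. the functions in $\Fe$ must be Hamming‑close. This is delicate, because two distinct $k$‑Fourier‑sparse Boolean functions differ on at least a $\Omega(1/k)$‑fraction of inputs (their difference is $\{0,\pm2\}$‑valued with Fourier sparsity $\le 2k$, so the uncertainty principle $|\supp(g)|\cdot|\supp(\widehat g)|\ge 2^n$ forces large $L_2$‑mass), and because one cannot freely perturb a $\pm1$‑valued sparse function and keep it $\pm1$‑valued and sparse. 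The clean route I would pursue is a generalized addressing construction in which all Fourier coefficients have equal magnitude and only the Fourier support $\supp(\widehat c)$ varies over a controlled family of sets; then learning $c$ reduces to recovering $\supp(\widehat c)$ by Fourier sampling, $S(\sigma)$ can be computed exactly, and the bound becomes a coupon‑collector‑type statement (matching the quantum coupon collector problem), with several independent addressing blocks combined if needed to enlarge the effective set of ``coupons.'' The main obstacle is exactly this: producing a family that is simultaneously large and tightly clustered while consisting of genuine $\pm1$‑valued functions, and getting the construction to deliver the full $k\log k$ rather than a smaller power of $k$; a secondary point is to check that the $c$‑independent branch $\ket u\ket{+}$ of the quantum example cannot help — automatic in the entropy bound above, but requiring a short separate hybrid argument in any sharper coupon‑collector‑style version.
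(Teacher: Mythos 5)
Your high-level framework is essentially the paper's own three-step information-theoretic argument (the learner must extract $\Omega(\log|\Fe|)$ bits about a uniformly random target; $T$ copies reveal at most $T$ times the per-copy information; the per-copy information is controlled by the von Neumann entropy of the averaged example state). But the entire content of the theorem lies in the step you explicitly leave open: you never exhibit the family $\Fe$, and you yourself flag that getting the full $k\log k$ out of an addressing-type construction is an unresolved obstacle, so as it stands nothing is proved. The paper closes exactly this gap with a concrete family (the same one Haviv and Regev use classically): $\Cc'=\{c_V\}$, where $V$ ranges over subspaces of $\F_2^n$ of dimension $n-\log k$ and $c_V(x)=-1$ iff $x\in V$. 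Each $c_V$ is $k$-Fourier-sparse (its Fourier support is $V^{\perp}$, of size $k$), a counting argument gives $\log|\Cc'|=\Omega\bigl((n-\log k)\log k\bigr)=\Omega(n\log k)$ since $k\le 2^{cn}$, and — this is precisely the ``large yet tightly clustered'' property you were seeking — every $c_V$ equals $1$ on a $(1-1/k)$-fraction of inputs, so every example state $\ket{\psi_V}$ has overlap $1-1/k$ with the single fixed state $2^{-n/2}\sum_x\ket{x,1}$. Hence the averaged example state has a dominant eigenvalue $1-O(1/k)$ and von Neumann entropy $O\bigl((n+\log k)/k\bigr)$, each quantum example yields only $O(n/k)$ bits, and $T=\Omega(k\log k)$ follows for every admissible $n$, with no dimension reduction.

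There is also a quantitative flaw in how you combine your two preliminary moves. Your per-example bound $1+\tfrac12 S(\sigma)$ never drops below one bit (the $c$-independent branch $\ket{u}\ket{+}$ still forces a full bit of mixing entropy), and you pair it with a reduction to $n_0=\Theta(\log k)$, where there are only $2^{O(k\log k)}$ $k$-Fourier-sparse functions in total. So $T=\Omega\bigl(\log|\Fe|/(1+S(\sigma))\bigr)$ can reach $\Omega(k\log k)$ only if $\log|\Fe|=\Theta(k\log k)$ (essentially the whole class) while simultaneously $S(\sigma)=O(1)$ — a far more delicate demand than the clustering the paper needs, and one your candidate constructions do not approach: the paper's subspace family in dimension $n_0$ has only $2^{\Theta((\log k)^2)}$ members, so your inequality would yield merely $\Omega((\log k)^2)$ there. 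The fix is to bound the entropy of the \emph{whole} averaged example state rather than splitting off the label qubit, so that a clustered family gives $o(1)$ bits per example; then the ratio is $\Omega(k\log k)$ for every $n$ with $k\le 2^{cn}$ and your (valid but unnecessary, and in your setup actively harmful) reduction to small $n$ can be discarded. Your side observations — the uncertainty-principle bound on the distance between distinct sparse functions, and closing $\Fe$ under negation to kill the cross terms — are correct but do not substitute for the missing construction.
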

	
	In Section~\ref{sec: potential direction to improve phase 1} we give a possible direction to prove an improved sample complexity for Phase~1 of our learning algorithm.

	\subsection{Upper bound on learning $k$-Fourier-sparse Boolean functions}
	\label{sec: Upper bound on learning k-Fourier-sparse Boolean functions}
We split our quantum learning algorithm into two phases. Suppose $c\in \Cc$ is the unknown concept, with Fourier dimension~$r$. In the first phase the learner uses samples from the distribution $\{\widehat{c}(S)^2\}_{S\in \01^n}$ to learn the Fourier span of $c$. In the second phase the learner uses uniform \emph{classical} examples to learn $c$ exactly, knowing its Fourier span. Phase~1 uses $O(rk)$ uniform quantum examples (for Fourier-sampling) and Phase~2 uses $O(rk\log k)$ uniform \emph{classical} examples. 
	
	\begin{theorem} 
		\label{thm:quantumlearnerforr}
		Let $k,r>0$. There exists a quantum learner that exactly learns (with high probability) an unknown $k$-Fourier-sparse $c:\01^n \to \pmset{}$ with Fourier dimension upper bounded by some known~$r$, from $O(rk\log k)$ uniform quantum examples.
	\end{theorem}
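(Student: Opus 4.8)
The plan is a two-phase quantum learner, along the lines sketched just before the theorem statement. Write $r_0:=\Fdim(c)\le r$. In Phase~1 I would use Fourier sampling (Lemma~\ref{lemma:fouriersampleusingexamples}) to recover $\Fspan(c)$ exactly, as the $\F_2$-span of the sampled sets. In Phase~2 I would change basis so that the target becomes a $k$-Fourier-sparse function on only $r_0$ variables, and then apply the Haviv--Regev classical learner (Theorem~\ref{thm:havivregevub}) to that smaller function, measuring the quantum examples to obtain classical ones.

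The engine of Phase~1 is the following structural claim about $k$-Fourier-sparse Boolean functions: \emph{for every subspace $W\subseteq\F_2^{n}$ with $\Fspan(c)\not\subseteq W$, the Fourier weight $\sum_{S\notin W}\widehat{c}(S)^{2}$ of $c$ outside $W$ is at least $1/k$.} To prove it, note first that enlarging $W$ only shrinks this weight and that $\supp(\widehat{c})\subseteq\Fspan(c)$, so it suffices to treat the case where $W$ is a hyperplane (codimension-$1$ subspace) of $\Fspan(c)$. Then $\supp(\widehat{c})\setminus W$ lies in a single coset $S^{*}+W$ with $S^{*}\in\supp(\widehat{c})\setminus W$; writing those characters as $S^{*}+h_1,\dots,S^{*}+h_\ell$ with $h_i\in W$, $\ell\le k$, and coefficients $\eps_1,\dots,\eps_\ell$, the part of $c$ outside $W$ equals $\chi_{S^{*}}\cdot\sum_{i}\eps_i\chi_{h_i}$. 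Since some $\widehat{c}(S^{*}+h_i)\neq 0$ and $S^{*}+h_i\notin W$, the function $c$ is not invariant under translation by $W^{\perp}$, so there is a $W^{\perp}$-coset on which $c$ is non-constant; on that coset each $\chi_{h_i}$ is a constant $s_i\in\{-1,1\}$ and $\chi_{S^{*}}$ takes both values $\pm1$, so there $c$ equals a constant plus $\chi_{S^{*}}\cdot\sigma$ with $\sigma=\sum_i\eps_i s_i$. As $c$ is $\pm1$-valued and non-constant on this coset we must have $|\sigma|=1$, and Cauchy--Schwarz gives $\sum_i\eps_i^{2}\ge\sigma^{2}/\ell=1/\ell\ge1/k$, which equals the Fourier weight of $c$ outside $W$. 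Granting the claim, Phase~1 is bookkeeping: draw Fourier samples one at a time and let $V$ be the $\F_2$-span of the samples seen so far; as long as $V\subsetneq\Fspan(c)$, the claim with $W=V$ says the next sample lies outside $V$ with probability $\ge1/k$, so $V$ grows in dimension within $O(k)$ samples in expectation, and this can happen at most $r_0\le r$ times. A Chernoff bound then shows that $O(rk+k\log k)=O(rk\log k)$ Fourier samples make $V=\Fspan(c)$ except with probability $1/\poly(k)$, and since the procedure of Lemma~\ref{lemma:fouriersampleusingexamples} succeeds with probability $1/2$, $O(rk\log k)$ uniform quantum examples yield that many samples with high probability.

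For Phase~2, having found $V=\Fspan(c)$ of dimension $r_0\le r$, I would pick $r_0$ linearly independent sampled sets $B_1,\dots,B_{r_0}$ --- they lie in $\supp(\widehat{c})$ because they were drawn from $\{\widehat{c}(S)^{2}\}_S$, and they span $\Fspan(c)$ --- and extend them to the columns of an invertible $B\in\F_2^{n\times n}$. By Lemma~\ref{lemma:basis}, $f_B$ depends only on its first $r_0$ coordinates and $\widehat{f_B}(i)\neq0$ for every $i\in[r_0]$, so $g(y):=f_B(y,0^{n-r_0})$ is a $k$-Fourier-sparse Boolean function on $r_0$ bits satisfying $c(w)=g\big((B^{\mathsf{T}}w)_{1\ldots r_0}\big)$ for all $w$ (this uses $f_B(B^{\mathsf{T}}w)=c(w)$ from Definition~\ref{defn:fourierbasis} and the fact that $f_B$ ignores its last $n-r_0$ coordinates). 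A uniform classical example $(x,c(x))$ for $c$ then gives the uniform example $\big((B^{\mathsf{T}}x)_{1\ldots r_0},\,c(x)\big)$ for $g$, since $B^{\mathsf{T}}$ is a bijection of $\F_2^{n}$. Hence $O(r_0k\log k)$ uniform quantum examples for $c$, once measured, supply that many uniform examples for $g$, and Theorem~\ref{thm:havivregevub} (applied with ``$n$'' equal to $r_0$) learns $g$ exactly with probability $1-2^{-\Omega(r_0\log k)}$. Outputting the hypothesis $w\mapsto g\big((B^{\mathsf{T}}w)_{1\ldots r_0}\big)$ completes the learner; the two phases together use $O(rk\log k)$ uniform quantum examples, and a union bound keeps the overall failure probability below $1/3$.

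I expect the substance of the argument to be concentrated in Phase~1, specifically in isolating and proving the structural claim that a $k$-Fourier-sparse Boolean function carries Fourier weight at least $1/k$ outside any subspace missing part of its Fourier span; once this is in hand, the sample-complexity accounting of Phase~1 and the basis-change reduction of Phase~2 are mechanical uses of the lemmas already stated. (Using the sharper control on heavy Fourier coefficients coming from our improved version of Chang's lemma, Phase~1 can in fact be run with only $O(k(\log k)^{2})$ uniform quantum examples, but that improvement is not needed for the bound stated here.)
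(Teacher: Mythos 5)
Your proof is correct, and your Phase~2 coincides with the paper's (change basis via Lemma~\ref{lemma:basis} so the target depends on its first $r_0$ bits, then feed measured quantum examples, re-encoded through $B^{\mathsf T}$, to the Haviv--Regev learner of Theorem~\ref{thm:havivregevub}). Where you genuinely diverge is Phase~1. The paper grows the span of the Fourier samples using Lemma~\ref{lem:phase1}, which guarantees Fourier weight at least $(r-r')/(k\log k)$ outside a dimension-$r'$ sample span, and that lemma is driven by the improved Chang's lemma (Theorem~\ref{thm:improvedchang}), the technical heart of the section. You instead prove the elementary claim that at least $1/k$ of the Fourier weight lies outside any subspace not containing $\Fspan(c)$: reduce to a hyperplane of the span, find a coset of $W^{\perp}$ on which $c$ is non-constant (the outside part is a nonzero function and flips sign under a translation $t\in W^{\perp}$ with $\chi_{S^*}(t)=-1$, which exists since $S^*\notin (W^{\perp})^{\perp}=W$), note that on such a coset $c=a+\sigma\chi_{S^*}$ with $\chi_{S^*}$ taking both signs, so $a=0$, $|\sigma|=1$, and Cauchy--Schwarz gives weight at least $1/\ell\ge 1/k$. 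This argument is sound and bypasses the Chang-type machinery entirely. The trade-off: your per-step bound of $1/k$ makes Phase~1 cost $O(rk)$ Fourier samples in expectation, which fits the $O(rk\log k)$ budget here because Phase~2 dominates (and, with $r=O(\sqrt{k}\log k)$, even recovers Theorem~\ref{thm:ourresultupperbound}), whereas the paper's bound is stronger when $r-r'$ is large and yields a Phase~1 cost of only $O(k\log k\log r)=O(k(\log k)^2)$ --- the property one would need if Phase~2 were ever improved below $O(rk\log k)$. Two routine points to tidy up: when invoking Theorem~\ref{thm:havivregevub} with ``$n$'' equal to $r_0$, take the sparsity parameter to be $\min(k,2^{r_0})$ so its hypothesis $k\le 2^n$ holds (this only helps the bound), and the reduction ``enlarge $W$'' should be spelled out as replacing $W$ by $W+H$ for a hyperplane $H$ of $\Fspan(c)$ containing $W\cap\Fspan(c)$, so that only the intersection with the Fourier span matters.
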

	
	The learner may not know the exact Fourier dimension~$r$ in advance, but Theorem~\ref{thm:fourierdimensionality} gives an upper bound $r=O(\sqrt{k}\log k)$, so our Theorem~\ref{thm:ourresultupperbound} follows immediately from     Theorem~\ref{thm:quantumlearnerforr}.

	 Before we prove this Theorem~\ref{thm:quantumlearnerforr}, we first  give a ``trivial'' algorithm for learning the Fourier support of Fourier-sparse functions quantumly. Gopalan et al.~\cite{gopalan:fouriersparsity} showed that every $k$-Fourier-sparse Boolean function is ``$2^{-\lceil \log k\rceil}$-granular'', i.e., every Fourier coefficient of a $k$-Fourier-sparse Boolean function $c$ is either $0$ or an integer multiple of $2^{-\lceil \log k\rceil}$. Using this observation, if one is allowed to Fourier-sample from $c$, then each $S$ with non-zero $\widehat{c}(S)$ will be observed with probability $\Omega(1/k^2)$, and using a coupon collector argument, we obtain the entire Fourier support using $O(k^2\log k)$ many Fourier-samples. Our main contribution in Theorem~\ref{thm:quantumlearnerforr} is to use the Fourier \emph{dimension} in order to improve this trivial quantum algorithm. In particular observe that for functions with Fourier dimension $\log k$ (such as $(\log k)$-juntas), the theorem above scales as $O(k\log^2k)$ which is better than the trivial algorithm by a factor of nearly $k$.
	
	
\subsubsection{Phase~1: Learning the Fourier span}
\label{sec: Phase 1: Learning the Fourier span}

In this phase of the algorithm our goal is to learn the $r$-dimensional Fourier span of the $k$-Fourier-sparse target concept~$c$, using $O(r k)$ Fourier-samples. The algorithm is very simple: Fourier-sample more and more $S$'s and keep track of their span; stop when we reach dimension~$r$. The key is the following technical lemma, which says that if our current span $V'$ does not yet equal the full Fourier span~$V$, then there is significant Fourier weight outside of $V'$. This implies that a small expected number of additional Fourier-samples will give us an $S\in V\setminus V'$, which will grow our current span. After $r$ such grow-steps we have learned the full Fourier span.
	
	\begin{lemma}
	\label{lem:phase1}
Let $V\subseteq\01^n$ be the $r$-dimensional Fourier span of $k$-Fourier-sparse function $c:\01^n\to\pmset{}$, and $V'\subseteq V$ be a proper subspace. Then $\sum_{S\in V\setminus V'}\widehat{c}(S)^2\geq 1/k$.
\end{lemma}

\begin{proof}
Let us assume the worst case, which is that $\dim(V')=r-1$.
Because we can do an invertible linear transformation on $c$ as in Lemma~\ref{lemma:Fouriercoeffmatrixprod}, we may assume without loss of generality that the one ``missing'' dimension corresponds to the variable $x_r$ (i.e., $V=\spann(V'\cup\{e_r\})$).
Let $g$ be the (not necessarily Boolean-valued) part of $c$ with Fourier coefficients in $V'$:
$$
g(x):=\sum_{S\in V'}\widehat{c}(S)\chi_S(x).
$$
Suppose, towards a contradiction, that the Fourier weight $W:=\sum_{S\in V\setminus V'}\widehat{c}(S)^2$ is $<1/k$. This implies that $c$ and $g$ have the same sign on every $x\in\01^n$, as follows (using Cauchy-Schwarz):
$$
|c(x)-g(x)|=\left|\sum_{S\in V\setminus V'}\widehat{c}(S)\chi_S(x)\right|\leq\sqrt{kW}<1.
$$
Since $c$ depends on the variable $x_r$, there exists an $x\in\01^n$ where $x_r$ is influential, i.e., $c(x)\neq c(x^r)$. But $g$ is independent of $x_r$, which implies $c(x) = \sgn(g(x)) = \sgn(g(x^r)) = c(x^r)$, a contradiction. Hence $W\geq 1/k$.
\end{proof}

We now conclude Phase~1 by presenting a quantum learning algorithm that learns the Fourier span of an unknown $r$-dimensional $c\in \Cc$, given uniform quantum examples for~$c$. 

\begin{theorem}
\label{theo: phase1_main_thm}
Let $k,r>0$. 
There exists a quantum learner that uses uniform quantum examples for an unknown $k$-Fourier-sparse $c:\01^{n} \to \pmset{}$ with Fourier dimension~$r$. After processing each new quantum example it outputs a subspace of the Fourier span of~$c$. This sequence of subspaces is non-decreasing, and after an expected number of at most $2rk$ quantum examples, the output equals the Fourier span of $c$.  
\end{theorem}

This quantum learner can actually run forever, but if we know the Fourier dimension $r$ of~$c$, or an upper bound~$r$ on the actual Fourier dimension (e.g., by Theorem~\ref{thm:fourierdimensionality}), then we can stop the learner after processing $6rk$ examples; now, by Markov's inequality, with probability $\geq 2/3$ the last subspace will be the Fourier span of~$c$. 

\medskip

\begin{proof}
In order to learn the Fourier span of $c$, the quantum learner simply takes Fourier-samples until they span an $r$-dimensional space.
 Since we can generate a Fourier-sample from an expected number of 2 uniform quantum examples (by Lemma~\ref{lemma:fouriersampleusingexamples}), the expected number of uniform quantum examples needed is at most twice the expected number of Fourier-samples. If our current sequence of Fourier-samples spans an $r'$-dimensional space $V'$, with $r'<r$, then Lemma~\ref{lem:phase1} implies that the next Fourier-sample has probability at least $1/k$ of yielding an $S\not\in V'$. Hence an expected number of at most $k$ Fourier-samples suffices to grow the dimension of $V'$ by at least~1. Since we stop at dimension~$r$, the overall expected number of Fourier-samples is at most~$2rk$.
\end{proof}

\subsubsection{Phase~2: Learning the function completely}
\label{sec: Phase 2: Learning the function completely}
	
	In the above Phase~1, the quantum learner obtains the Fourier span of~$c$, which we will denote by~$\T$. Using this, the learner can restrict to the following concept class
	$$
	\Cc'=\{c:\01^{n}\rightarrow \pmset{} \mid c \text{ is } k\text{-Fourier-sparse with Fourier span }\T\}
	$$
	Let $\dim(\T)=r$. Let $B\in \F_2^{n\times n}$ be an invertible matrix whose first $r$ columns form a basis for $\T$. Consider $c_B=c\circ (B^{-1})^\mathsf T$ for $c\in \Cc'$. By Lemma~\ref{lemma:basis} it follows that $c_B$ depends on only its first~$r$ bits, and we can write $c_B:\01^r\rightarrow \pmset{}$. Hence the learner can apply the transformation $c\mapsto c\circ (B^{-1})^{\mathsf T}$ for every $c\in \Cc'$ and restrict to the concept class
	$$
	\Cc_r'=\{c':\01^{r}\rightarrow \pmset{} \mid c'=c\circ(B^{-1})^\mathsf{T}\text{ for some }c\in\Cc'\text{ and invertible }B\}.
	$$
	We now conclude Phase~2 of the algorithm by invoking the classical upper bound of Haviv-Regev (Theorem~\ref{thm:havivregevub}) which says that $O(rk\log k)$ uniform classical examples of the form $(z,c'(z))\in\01^{r+1}$ suffice to learn $\Cc'_r$. Although we assume our learning algorithm has access to uniform examples of the form $(x,c(x))$ for $x\in \01^n$, the quantum learner knows $B$ and hence can obtain a uniform example $(z,c'(z))$ for $c'$ by letting $z$ be the first $r$ bits of $B^\mathsf{T}x$ and $c'(z)=c(x)$.

\subsection{Lower bound  on learning $k$-Fourier-sparse Boolean functions}
\label{sec: Lower bound  on learning k-Fourier-sparse Boolean functions}
	
	In this section we show that $\Omega(k\log k)$ uniform quantum examples are necessary to learn the concept class of $k$-Fourier-sparse Boolean functions. 
	
	\begin{theorem}\label{thm:quantumlooselowerbound}
		For every $n$, constant $c\in (0,1)$ and $k\leq 2^{cn}$, the number of uniform quantum examples necessary to learn the class of $k$-Fourier-sparse Boolean functions, with success probability $\geq 2/3$, is~$\Omega(k\log k)$.
	\end{theorem}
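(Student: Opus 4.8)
The plan is to use an information-theoretic / hybrid-type argument adapted to the quantum sample setting. Fix a constant $c\in(0,1)$ and $k\le 2^{cn}$, and consider the sub-family of concept class $\Cc$ consisting of functions of the form $\chi_S$ for $S$ ranging over some large set. More precisely, I would build a hard distribution on $k$-Fourier-sparse functions by taking a base function $g$ of sparsity $k/2$ supported on a fixed subspace, and then perturbing it by adding a single new character $\chi_S$ with a small coefficient, where $S$ is chosen uniformly from a subspace $W$ of dimension $\Theta(\log k)$ that is linearly independent from the support of $g$. By Lemma~\ref{lemma:granularityofsparse} the coefficients live on a grid of spacing $\approx 1/k$, so a natural construction is something like $f_S = (1-2^{2-\log k})^{1/2}$-weighted combination where $\widehat{f_S}(S)=\Theta(1/k)$; this keeps $f_S$ Boolean, $k$-Fourier-sparse, and makes the $2^{\Theta(\log k)} = \mathrm{poly}(k)$ many functions $\{f_S : S\in W\}$ pairwise distinguishable only by identifying $S$. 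Learning then requires extracting $\log|W| = \Theta(\log k)$ bits, but because the distinguishing coefficient is only $\Theta(1/k)$ in magnitude, each quantum example carries very little information about which $S$ was chosen.

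The key quantitative step is to bound the mutual information (or trace distance) that $T$ copies of the uniform quantum example $\ket{\psi_S} = \frac{1}{\sqrt{2^n}}\sum_x \ket{x,f_S(x)}$ reveal about $S$. I would argue that any two such states $\ket{\psi_S},\ket{\psi_{S'}}$ have large overlap: since $f_S$ and $f_{S'}$ differ only via a coefficient of size $O(1/k)$ on a single character, one can show $|\langle\psi_S|\psi_{S'}\rangle| \ge 1 - O(1/k)$ (the functions agree on a $1-O(1/k)$ fraction of inputs, which controls the inner product of the example states). Then by standard arguments (e.g.\ Holevo's bound applied to the ensemble $\{\ket{\psi_S}^{\otimes T}\}$, or a hybrid argument summing $T$ single-copy distances), the accessible information about $S$ from $T$ copies is $O(T/k)$. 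To identify $S$ out of $|W| = k^{\Omega(1)}$ possibilities with constant success probability one needs $\Omega(\log k)$ bits of information, forcing $T = \Omega(k\log k)$.

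A cleaner route, which I would actually pursue, is the ``many-coordinate'' version: instead of one unknown $S$, let the target be determined by $\Theta(\log k)$ independent unknown bits, each selecting whether or not to include one of $\Theta(\log k)$ designated characters $\chi_{S_1},\dots,\chi_{S_m}$ (all from a subspace independent of a fixed sparsity-$O(k)$ base, so the total sparsity stays $\le k$). Each designated character gets coefficient $\Theta(1/k)$, so flipping any one of the $m$ hidden bits changes $f$ on only a $O(1/k)$ fraction of the domain, hence changes the example state by $O(1/k)$ in trace distance. By a hybrid argument across both the $T$ copies and the $m$ coordinates, to learn all $m$ bits (which is necessary for exact learning) one needs $T \cdot m \cdot O(1/k) = \Omega(m)$, i.e.\ $T = \Omega(k)$; combined with an information accounting that each of the $m$ bits must individually be recoverable one squeezes out the extra $\log k$ factor, giving $T = \Omega(k\log k)$. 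The main obstacle I anticipate is the coefficient-feasibility bookkeeping: I must exhibit genuine Boolean functions (not just formal Fourier expansions) realizing these perturbations with sparsity exactly $\le k$ and with the $\Theta(1/k)$-magnitude coefficients on the designated characters — this is where Lemma~\ref{lemma:granularityofsparse} and an explicit gadget construction (likely built from an $\AND$-like or addressing-function gadget on $\Theta(\log k)$ fresh variables, tensored with a base function) need to be used carefully to make the numerology work out, and to ensure the hidden bits are information-theoretically independent given the examples.
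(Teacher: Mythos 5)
There is a genuine quantitative gap in your approach, and it appears in both routes you sketch. First, the trace-distance estimate in the ``many-coordinate'' hybrid is off by a square root: for pure states, an overlap $|\braket{\psi_S}{\psi_{S'}}|\geq 1-O(1/k)$ gives trace distance $\sqrt{1-|\braket{\psi_S}{\psi_{S'}}|^2}=O(1/\sqrt{k})$, not $O(1/k)$, so the hybrid over $T$ copies only forces $T=\Omega(\sqrt{k})$, far from the target. Second, and more fundamentally, your hard family carries only $\Theta(\log k)$ bits of hidden information, and with such a small family the per-copy information about $S$ cannot be bounded by $O(1/k)$ as you claim: for an ensemble of $M=\poly(k)$ pure states all within overlap $1-\Theta(1/k)$ of a fixed state, the Holevo quantity (and even the accessible information) can genuinely be $\Theta(\frac{\log M}{k})=\Theta(\frac{\log k}{k})$ --- e.g.\ for states of the form $\sqrt{1-\eps}\ket{0}+\sqrt{\eps}\ket{j}$ a computational-basis measurement reveals $j$ with probability $\eps$. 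Plugging $\Theta(\log k/k)$ per copy against a budget of $\Theta(\log k)$ bits yields only $T=\Omega(k)$; the ``squeeze out an extra $\log k$ by per-bit accounting'' step is exactly the part that does not follow, since the $m=\Theta(\log k)$ bits can be learned ``in parallel'' at no extra cost within these bounds.

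The paper avoids this by making the hard family \emph{huge} rather than of size $\poly(k)$: it takes (following Haviv--Regev) all concepts $c_V$ with $c_V(x)=-1$ iff $x\in V$, where $V$ ranges over subspaces of $\F_2^n$ of dimension $n-\log k$, so that $\log|\V|=\Omega((n-\log k)\log k)=\Omega(n\log k)$ for $k\leq 2^{cn}$. It then runs the standard three-step information argument: $I(\mathbf{A}:\mathbf{B})\geq\Omega(\log|\V|)$, subadditivity over the $T$ copies, and a per-copy bound $I(\mathbf{A}:\mathbf{B}_1)\leq S(\rho)=O((n+\log k)/k)$, obtained because every $\ket{\psi_V}$ has overlap $1-1/k$ with the fixed state $2^{-n/2}\sum_x\ket{x,1}$, so the average example state has top eigenvalue $\geq 1-1/k$ and rank at most $2^{n+1}$. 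The ratio of the $\Omega(n\log k)$ entropy budget to the $O(n/k)$ per-example information is what produces $\Omega(k\log k)$; the extra factor of $n$ in the numerator (absent in any family of size $\poly(k)$) is essential, which is also why the hypothesis $k\leq 2^{cn}$ appears. To repair your argument you would need to replace your $\Theta(\log k)$-bit gadget by a family whose entropy grows like $n\log k$ (your base-plus-perturbation construction does not naturally do this), at which point you essentially recover the paper's proof; the coefficient-feasibility issue you flag is then also resolved for free, since the subspace indicators $c_V$ are genuine Boolean functions with all nonconstant Fourier coefficients of magnitude $\Theta(1/k)$.
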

	
	\begin{proof}
		Assume for simplicity that $k$ is a power of~2, so $\log k$ is an integer.
		We prove the lower bound for the following concept class, which was also used for the classical lower bound of Haviv and Regev~\cite{haviv:listdecoding}: let $\V$ be the set of distinct subspaces in $\01^{n}$ with dimension $n-\log k$ and
		$$
		\Cc=\{c_V:\01^n\rightarrow \pmset{} \mid c_V(x)=-1 \text{ iff } x\in V,\text{ where } V\in \V\}.
		$$
		Note that 
		every function in $\Cc$ has Fourier sparsity at most $k$,
		$|\Cc|=|\V|$, and each $c_V\in\Cc$ evaluates to 1 on a $(1-1/k)$-fraction of its domain.
		
		We prove the lower bound for $\Cc$ using a three-step information-theoretic technique. A similar approach was used in proving classical and quantum PAC learning lower bounds in~\cite{arunachalam:optimalpaclearning}.  Let $\mathbf{A}$ be a random variable that is uniformly distributed over $\Cc$. Suppose $\mathbf{A}=c_V$, and let $\mathbf{B}=\mathbf{B}_1\ldots\mathbf{B}_T$ be $T$ copies of the quantum example $$\ket{\psi_V}=\frac{1}{\sqrt{2^n}}\sum_{x\in \01^{n}}\ket{x,c_V(x)}$$ for $c_V$. The random variable $\mathbf{B}$ is a function of the random variable~$\mathbf{A}$. 
		The following upper and lower bounds on $I(\mathbf{A}:\mathbf{B})$ are similar to~\cite[proof of Theorem~12]{arunachalam:optimalpaclearning} and we omit the details of the first two steps here.
		\begin{enumerate}
			\item $I(\mathbf{A}:\mathbf{B})\geq \Omega(\log|\V|)$ because $\mathbf{B}$ allows one to recover $\mathbf{A}$ with high probability.
			\item $I(\mathbf{A}:\mathbf{B})\leq T\cdot I(\mathbf{A}:\mathbf{B}_1)$ using a chain rule for mutual information.
			
			\item $I(\mathbf{A}:\mathbf{B}_1)\leq O(n/k)$.\\[1mm]
			\emph{Proof (of 3).} Since $\mathbf{A}\mathbf{B}$ is a classical-quantum state, we have 
			$$
			I(\mathbf{A}:\mathbf{B}_1)= S(\mathbf{A})+S(\mathbf{B}_1)-S(\mathbf{A}\mathbf{B}_1)=S(\mathbf{B}_1),
			$$ 
			where the first equality is by definition and the second equality uses $S(\mathbf{A})=\log |\V|$ since $\mathbf{A}$ is uniformly distributed over~$\Cc$, and $S(\mathbf{A}\mathbf{B}_1)=\log |\V|$ since the matrix 
			$$
			\sigma=\frac{1}{|\V|} \sum_{V\in \V} \ketbra{V}{V}\otimes \ketbra{\psi_V}{\psi_V}
			$$ is block-diagonal with $|\V|$ rank-1 blocks on the diagonal. It thus suffices to bound the entropy of the (vector of singular values of the) reduced state of $\mathbf{B}_1$, which~is
			$$
			\rho=\frac{1}{|\V|}\sum_{V\in \V}\ketbra{\psi_V}{\psi_V}.
			$$
			Let $\sigma_0\geq \sigma_1\geq\cdots\geq \sigma_{2^{n+1}-1}\geq 0$ be the singular values of $\rho$. Since~$\rho$ is a density matrix, these form a probability distribution. 
			Now observe that $\sigma_0\geq 1-1/k$ since the inner product between $\frac{1}{\sqrt{2^n}}\sum_{x\in \01^n}\ket{x,1}$ and every $\ket{\psi_V}$ is $1-1/k$. 
			Let $\mathbf{N}\in\{0,1,\ldots,2^{n+1}-1\}$ be a random variable with probabilities $\sigma_0,\sigma_1,\ldots,\sigma_{2^{n+1}-1}$, and $\mathbf{Z}$ an indicator for the event ``$\mathbf{N}\neq 0$.'' Note that $\mathbf{Z}=0$ with probability $\sigma_0\geq 1-1/k$, and $H(\mathbf{N}\mid \mathbf{Z}=0)=0$. By a similar argument as in~\cite[Theorem~15]{arunachalam:optimalpaclearning}, we~have 
			\begin{align*}
			S(\rho) & =H(\mathbf{N})=H(\mathbf{N},\mathbf{Z})=H(\mathbf{Z})+H(\mathbf{N}\mid\mathbf{Z})\\
			& =H(\sigma_0)+\sigma_0\cdot H(\mathbf{N}\mid \mathbf{Z}=0) + (1-\sigma_0)\cdot H(\mathbf{N}\mid \mathbf{Z}=1) \\
			& \leq H\Big(\frac{1}{k}\Big) + \frac{n+1}{k}\leq O\Big(\frac{n+\log k}{k}\Big) 
			\end{align*}
			using $H(\alpha)\leq O(\alpha\log (1/\alpha))$.
		\end{enumerate}
		Combining these three steps implies $T=\Omega(k(\log |\V|)/n)$.
		It remains to lower bound $|\V|$. 
		
		\begin{claim}
			The number of distinct $d$-dimensional subspaces of $\F_2^n$ is at least $ 2^{\Omega((n-d)d)}$.
		\end{claim}
		
		\begin{proof}
			We can specify a $d$-dimensional subspace by giving $d$ linearly independent vectors in it. The number of distinct sequences of $d$ linearly independent vectors is exactly $(2^n-1)(2^n-2)(2^n-4)\cdots (2^n-2^{d-1})$, because once we have the first $t$ linearly independent vectors, with span $\Se_t$, then there are $2^n-2^t$ vectors that do not lie in $\Se_t$. 
			
			However, we are double-counting certain subspaces in the argument above, since there will be multiple sequences of vectors yielding the same subspace. The number of sequences yielding a fixed $d$-dimensional subspace can be counted in a similar manner as above and we get
			$(2^{d}-1)(2^{d}-2)(2^{d}-4)\cdots (2^{d}-2^{d-1})$.
			So the total number of subspaces is
			$$
			\frac{(2^n-1)(2^n-2)\cdots (2^n-2^{d-1})}{(2^{d}-1)(2^{d}-2)\cdots (2^{d}-2^{d-1})}\geq \frac{(2^n-2^{d-1})^{d}}{(2^{d}-1)^{d}} \geq 2^{\Omega((n-d)d)}.
			$$ 
		\end{proof}
		Combining this claim (with $d=n-\log k$) and $T=\Omega(k(\log |\V|)/n)$ gives $T=\Omega(k\log k)$.
	\end{proof}

\subsection{A potential direction to prove an improved sample complexity for Phase~1}
\label{sec: potential direction to improve phase 1}
In this section we give a potential direction to prove that in expectation $\Tilde{O}(k)$ Fourier-samples are sufficient for Phase~1 of our learning algorithm presented in Section~\ref{sec: Phase 1: Learning the Fourier span}. Recall
Phase~1 of our learning algorithm. Given a $k$-Fourier-sparse function $c$, Phase 1 starts by finding an $S \in \supp(\widehat{c})$ such that $S \neq 0^n$. Lemma~\ref{lem:phase1} implies that an expected number of $O(k)$ many Fourier-samples
are sufficient to sample such an $S$. Chang's lemma, a central result in additive combinatorics, gives tighter bound on the expected number of samples for this step. 
Chang's lemma upper bounds the dimension of the span of the ``large'' Fourier coefficients.
	
	
	\begin{lemma}[Chang's lemma]
	\label{lem:chang}
		Let $\alpha\in(0,1)$ and $\rho>0$. For every $f:\01^n\rightarrow \pmset{}$ that satisfies $\widehat{f}(0^n)=1-2\alpha$, we~have
		\begin{align}
		\label{eq:changinequality}
		\dim(\spann\{S:|\widehat{f}(S)|\geq \rho\alpha\})\leq \frac{2\log(1/\alpha)}{\rho^2}.
		\end{align}
	\end{lemma}
	
	Let us consider Chang's lemma for a $k$-Fourier-sparse Boolean function $c: \{0,1\}^n \to \{-1,1\}$ of Fourier dimension $r$ and let $\rho \in (0,1]$. In particular, consider the case $\rho\alpha=1/k$. In this case, since all elements of the Fourier support satisfy $|\widehat{c}(S)|\geq 1/k$ by Lemma~\ref{lemma:granularityofsparse}, the left-hand side of Eq.~\eqref{eq:changinequality} equals the Fourier dimension~$r$ of $c$. 
	Thus Chang's lemma gives
	$$
	r\leq 2\alpha^2 k^2 \log \rho k \leq 2\alpha^2 k^2 \log k,
	$$
	which implies
	\begin{align}
	    \sum_{S \neq 0^n} \widehat{c}(S)^2 = \Omega\left(\frac{\sqrt{r}} {k \sqrt{\log k}} \right). \label{eqn: old Chang lower bound}  
	\end{align}
	Thus an expected number of $O((k \sqrt{\log k})/\sqrt{r})$ many Fourier-samples
	are sufficient to obtain an $S\in \supp(\widehat{c})$ such that $S \neq 0^n$ in Phase~1. 
	This is already an improvement from what Lemma~\ref{lem:phase1} guaranteed.
	

    In this section we give an improvement of Chang's lemma for $k$-Fourier-sparse Boolean functions:
	\begin{theorem}\label{thm:improvedchang}
		Let $\alpha\in(0,1)$ and $k\geq 2$. For every $k$-Fourier-sparse $f:\01^n \to \pmset{}$ that satisfies $\widehat{f}(0^n)=1-2\alpha$ and $\Fdim(f)=r$, we~have
		$$
		\widehat{f}(0^n) \leq 1 - \frac{r}{k \log k}.
		$$
	\end{theorem}
	We remark that in a follow-up paper~\cite{chakraborty2020tight}, a subset of the authors gave a refinement of the theorem above.


		
	Before giving a proof of Theorem~\ref{thm:improvedchang}, let us first discuss how this theorem improves the analysis of Phase~1 of our learning algorithm. Theorem~\ref{thm:improvedchang} implies that for a $k$-Fourier-sparse Boolean function $c:\{0,1\}^n \to \{-1, 1\}$ of Fourier dimension $r$, 
	$$
	    \sum_{S: S\neq 0^n} \widehat{c}(S)^2 = \Omega(r/(k \log k)).
	$$
	This is a better lower bound on the Fourier weight of $c$ on the set $\{0,1\}^n \setminus \{0^n\}$ than that obtained from Chang's lemma (Equation~\ref{eqn: old Chang lower bound}). Thus an expected number of $O((k \log k)/r)$ many uniformly quantum samples is sufficient to obtain an $S \in \supp(\widehat{c})$ such that $S \neq 0^n$.
	
	We suspect that Theorem~\ref{thm:improvedchang} can in fact
		lead to an $\Tilde{O}(k)$ learning algorithm for Phase~1. Towards that end we make the following conjecture which can be viewed as a generalization of Theorem~\ref{thm:improvedchang}.
		
		\begin{conjecture}
		\label{conjecture : phase1} 
		    Let $n>0$ and $1\leq k\leq 2^n$. For every $k$-Fourier-sparse $f:\01^n \to \pmset{}$ with Fourier span~$\mathcal{V}$ and Fourier dimension $r$, the following holds: for every $r'>0$ and $\mathcal{S} \subset \mathcal{V}$ satisfying $\dim(\spann(\mathcal{S})) = r'$, we have
		$$
		\sum_{S\in \spann(\mathcal{S})} \widehat{f}(S)^2 \leq 1-\frac{r-r'}{k\log k}.
		$$
	    \end{conjecture}
	    If the above conjecture is true then it would imply an $\Tilde{O}(k)$ learning algorithm for Phase~1. Let $c : \{0,1\}^n \to \{-1,1\}$ be a $k$-Fourier-sparse function of Fourier dimension $r$. Assuming Conjecture~\ref{conjecture : phase1} to be true we have 
	    $$
	        \sum_{S\not\in \spann(\mathcal{S})}\widehat{c}(S)^2 \geq \frac{r-r'}{k\log k}.
	    $$
		So the expected number of samples to increase the dimension by $1$ is $\leq\frac{k\log k}{r-r'}$. Accordingly, the expected number of Fourier-samples needed to learn the whole Fourier span of $f$ is at most
		$$
		\sum_{i = 1}^{r} \frac{k\log k}{i}\leq O( k\log k\log r),
		$$
		where the final inequality used $\sum_{i=1}^r \frac{1}{i}=O(\log r)$. 		We now proceed to the proof of Theorem~\ref{thm:improvedchang}.
	   
	   \vspace{1em}

\subsubsection{Proof of Theorem~\ref{thm:improvedchang}}

		We first define the following notation. For $U\subseteq  [r]$, let $f^{(U)}$ be the function obtained by fixing the variables $\{x_i\}_{i\in U}$ in $f$ to $x_i = (1+\sign(\widehat{f}(i)))/2$ for all $i\in U$. Note that fixing variables cannot increase Fourier sparsity. For $i,j\in [r]$, define $f^{(i)}=f^{(\{i\})}$ and $f^{(ij)}=f^{(\{i,j\})}$. In this proof, for an invertible matrix $B\in \F_2^{n\times n}$, we will often treat its columns as a basis for the space~$\F_2^n$. Recall $f_B(x)=f((B^{-1})^Tx)$ from Definition~\ref{defn:fourierbasis}. We let $f_B^{(i)}$ be the function obtained by fixing $x_i = (1+\sign(\widehat{f}(i)))/2$ in the function $f_B$.
		

		The core idea in the proof of the theorem is the following structural lemma, which says that there is a particular $x_i$ that we can fix in the function $f_B$ without decreasing the Fourier dimension very much. 
		\begin{lemma}
			\label{lem:mainlemma}
			For every $k$-Fourier-sparse Boolean function $f:\01^n\rightarrow \pmset{}$ with $\Fdim(f)=r$, there exists an invertible matrix $B\in \F_2^{n\times n}$ and an index $i\in [r]$ such that $\Fdim(f_B^{(i)})\geq r-\log k$ and $\widehat{f_B}(j)\neq 0$ for all $j\in [r]$.
		\end{lemma}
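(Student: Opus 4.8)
The plan is to use the freedom in choosing $B$ to reduce the statement to a purely combinatorial question about the Fourier support, and then answer that question by a counting argument. \textbf{Choosing the basis.} Since $\supp(\widehat{f})$ spans $\Fspan(f)$, I would pick linearly independent $B_1,\dots,B_r\in\supp(\widehat{f})$ and extend them to an invertible $B\in\F_2^{n\times n}$. By Lemma~\ref{lemma:basis}, $g:=f_B$ is $k$-Fourier-sparse, has influential variables exactly $[r]$, satisfies $\widehat{g}(j)=\widehat{f}(B_j)\neq 0$ for every $j\in[r]$, and has $\supp(\widehat g)\subseteq\F_2^{[r]}$. So it suffices to find $i\in[r]$ with $\Fdim(f_B^{(i)})\geq r-\log k$, and if $r\leq\log k$ there is nothing to prove, so I assume $r>\log k$.

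\textbf{Effect of one restriction.} Put $\mathcal F=\supp(\widehat g)$. Fixing $x_i$ merges the two Fourier coefficients on each coset $\{T,\,T\cup\{i\}\}$ of the line $\langle e_i\rangle$ into the single coefficient $\widehat{g}(T)+(-1)^{b_i}\widehat{g}(T\cup\{i\})$; a set $S$ with $S\oplus e_i\notin\mathcal F$ is then unaffected except for the harmless forgetting of coordinate $i$, so $\Fdim(f_B^{(i)})\geq\dim\spann\{\,S\setminus\{i\}:S\in\mathcal F,\ S\oplus e_i\notin\mathcal F\,\}$. Since all of $e_1,\dots,e_r$ lie in $\mathcal F$, each $e_j$ with $j\neq i$ and $\{i,j\}\notin\mathcal F$ survives with a nonzero coefficient, and these surviving $e_j$ are linearly independent; hence
\[
\Fdim(f_B^{(i)})\ \geq\ (r-1)-\deg_G(i),
\]
where $G$ is the graph on vertex set $[r]$ with an edge $\{i,j\}$ precisely when $e_i\oplus e_j\in\mathcal F$. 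Thus it is enough to find a coordinate of $G$-degree at most $\log k-1$.

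\textbf{Finding the coordinate (the hard part).} Because the weight-$2$ vectors $e_i\oplus e_j$ are pairwise distinct, $G$ has at most $|\mathcal F|\leq k$ edges, so some vertex has degree at most $2k/r$; this already settles the lemma once $r$ exceeds roughly $2k/\log k$, and it is trivial for $r\leq\log k$. The difficult regime is $\log k< r\lesssim 2k/\log k$, where the bound $\Fdim(f_B^{(i)})\geq(r-1)-\deg_G(i)$ is too lossy: a ``cancelled'' vector $e_j$ (one with $\{i,j\}\in\mathcal F$) may still lie in $\spann(\supp(\widehat{f_B^{(i)}}))$, recovered from higher-order Fourier sets that survive, so only \emph{essential} cancellations actually cost dimension. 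In this regime I would try to show that for a suitably chosen $B$ (and $i$) at most $\log k$ independent directions are genuinely lost, controlling the real cancellations via the granularity of Lemma~\ref{lemma:granularityofsparse} — a cancellation forces two Fourier coefficients, each an integer multiple of $2^{1-\lfloor\log k\rfloor}$, to have equal absolute value — possibly combined with an amortized or inductive argument over iterated restrictions (one fixes $x_i$ for all $i\in[r]$ to reach a constant, total dimension loss $r$, and ``expensive'' single steps should be rare). Ruling out that cancellations conspire to destroy more than $\log k$ dimensions at every coordinate simultaneously, and checking that insisting on the particular value $b_i$ of the statement does not obstruct the choice of $i$, is the main obstacle.
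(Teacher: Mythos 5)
There is a genuine gap, and you have honestly flagged it yourself: the entire substantive case of the lemma is left unproven. Your counting argument (at most $k$ weight-two vectors in the support, so some vertex of $G$ has degree at most $2k/r$) only yields the claim when $r\gtrsim 2k/\log k$; but by Theorem~\ref{thm:fourierdimensionality} every $k$-Fourier-sparse function has $r=O(\sqrt{k}\log k)$, which for all but small $k$ lies entirely inside your ``hard regime'' $\log k<r\lesssim 2k/\log k$. So the regime you settle is essentially vacuous, and the regime you defer (``I would try to show\dots'', ``the main obstacle'') is the whole lemma. The ideas you gesture at — granularity forcing cancelling coefficients to have equal magnitude, an amortized count of ``expensive'' restrictions — are not developed into an argument, and it is not clear they can be with your fixed choice of $B$: once you commit to an arbitrary basis of $\supp(\widehat f)$ for the first $r$ columns, a single restriction $x_i$ can genuinely destroy many independent directions, and nothing in the graph bound or the granularity constraint rules out that this happens for \emph{every} $i\in[r]$ simultaneously for that particular basis.

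The paper's proof works differently precisely at this point: the matrix is not chosen up front from the support but by an extremal argument. One takes $D$ maximizing $\Fdim(f_D^{(1)})$ subject to $\widehat{f_D}(1)\neq 0$, sets the dimension drop to be $t$ (so $\Fdim(f_D^{(1)})=r-t$), and then builds $B=DC$ so that coordinates $2,\dots,t$ all have at most the same small residual dimension $r-t$ after restriction (this is where maximality is used). One then shows that the restrictions in directions $2,\dots,t$ force the function, after a suitable assignment of $(x_1,x_{t+1},\dots,x_r)$, to collapse to the $(t-1)$-bit $\AND$ (up to negations), which has Fourier sparsity $2^{t-1}$; combined with the $\chi_1$-direction this gives sparsity $2^t\leq k$, hence $t\leq\log k$ and $\Fdim(f_B^{(1)})\geq r-\log k$. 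It is this ``collapse implies AND-structure implies exponential sparsity'' mechanism, absent from your proposal, that converts a large dimension loss into a sparsity lower bound and bounds the loss by $\log k$; your weight-one/weight-two bookkeeping has no analogous lever. (Your observation that $\Fdim(f_B^{(i)})\geq (r-1)-\deg_G(i)$ is correct as far as it goes, and your handling of the prescribed sign $b_i$ is indeed a further loose end, but the central missing ingredient is the extremal choice of basis together with the AND-collapse argument.)
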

		We defer the proof of the lemma to later and first conclude the proof of the theorem assuming the lemma. Consider the matrix~$B$ defined in Lemma~\ref{lem:mainlemma}. Using Lemma~\ref{lemma:basis} it follows that $f_B$ has only $r$ influential variables, so we can write $f_B:\01^r\rightarrow \pmset{}$, where $\widehat{f_B}(j)\neq 0$ for every $j\in [r]$. Also, $\widehat{f_B}(0^r)=\widehat{f}(0^n)=1-2\alpha$.  For convenience, we abuse notation and abbreviate~$f=f_B$. It remains to show that for every $f:\01^r\rightarrow \pmset{}$ with $\widehat{f}(j)\neq 0$ for all $j\in [r]$, we have $2\alpha=1-\widehat{f}(0^r) \geq r/(k\log k)$. We prove this by induction on~$r$.	

		\paragraph{Base case.} Let $r = 1$. Then $k=2$ (since $r\geq \log k$ and $k\geq 2$ by assumption). Note that the only Boolean functions with Fourier dimension $1$ and $|\supp(\widehat{f})|\leq 2$ are $\{\chi_j,-\chi_j\}$, where $\chi_j = (-1)^{x_j}$, for $j \in [n]$. In both these cases $1-\widehat{f}(0^r)=1$ and $r/(k\log k)=1/2$ (although the Fourier sparsity of $\chi_j$ is $1$, we are implicitly working with a concept class of $2$-sparse Boolean functions, hence $k=2$).
		
		\paragraph{Induction hypothesis.} Suppose that for all $p\in \{1,\ldots, r-1\}$ and  $k$-Fourier-sparse Boolean function $g:\01^p\rightarrow\pmset{}$ with $\Fdim(g)=p$ and $\widehat{g}(j)\neq 0$ for all $j\in [p]$, we have $1-\widehat{g}(0^p)\geq p/(k\log k)$.
		
		\paragraph{Induction step.} Let $i\in [r]$ be the index from Lemma~\ref{lem:mainlemma}. Note that $f^{(i)}$ is still $k$-Fourier-sparse and $\widehat{f^{(i)}}(0^{r-1})=1-2\alpha+|\widehat{f}(i)|$. Since $|\widehat{f}(i)|\geq 1/k$ (by Lemma~\ref{lemma:granularityofsparse}), we have
		\begin{align*}
		\widehat{f^{(i)}}(0^{r-1})\geq 1-2\alpha +1/k.
		\end{align*}
		Since $r-\log k\leq\Fdim(f^{(i)})\leq r-1$, we can use the induction hypothesis on the function~$f^{(i)}$ to conclude that
		\begin{align*}
		2\alpha\geq 1-\widehat{f^{(i)}}(0^{r-1})+\frac{1}{k}\geq \frac{r-\log k}{k\log k}+\frac{1}{k}=\frac{r}{k\log k}.
		\end{align*}
		This concludes the proof of the induction step and the theorem. We now prove Lemma~\ref{lem:mainlemma}.

		\begin{proof}[Proof of Lemma~\ref{lem:mainlemma}]
			In order to construct $B$ as in the lemma statement, we
			first make the following observation.
			
			\begin{observation}
				\label{obs:mainobservation}
				For every Boolean function $f: \{0,1\}^n\rightarrow \pmset{}$ with $\Fdim(f)=r$, there exists an invertible $B\in \F_2^{n\times n}$ such that:
				\begin{enumerate}
					\item The Fourier coefficient  $\widehat{f_{B}}(1)$ is non-zero.
					\item There exists a $t\in [r]$ such that,  for all $j\in \{2,\ldots,t\}$, we have $\Fdim(f_B^{(j)})\leq r-t$.
					\item The Fourier span of $f_{B}^{(1)}$ is spanned by $\{e_{t+1},\ldots,e_r\}$.
					\item For $\ell\in \{t+1,\ldots,r\}$, the Fourier coefficients $\widehat{f_{B}^{(1)}}(\ell)$ are non-zero.
				\end{enumerate}
			\end{observation}

			We defer the proof of this observation to the end. We proceed to prove the lemma assuming the observation. Note that Property~3 gives the following simple corollary:

			\begin{corollary}
				\label{cor:commutative}
				$f_B^{(1)}$ is a function of $x_{t+1}, \dots, x_r$ and independent of $x_2, \dots, x_t$ (and hence $f_B^{(1)}=f_B^{(i1)} = f_B^{(1i)}$ for every $i \in \{2,\dots, t\}$).
			\end{corollary}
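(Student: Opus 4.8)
The plan is to read the statement straight off Property~3 of Observation~\ref{obs:mainobservation}, translating the Fourier condition into a statement about relevant variables exactly as in Lemma~\ref{lemma:basis}. First I would recall the elementary fact that for any $g:\01^n\to\R$, if every $S\in\supp(\widehat{g})$ has $S_i=0$, then $g$ does not depend on $x_i$ (indeed $\chi_S(x)=(-1)^{S\cdot x}$ is invariant under flipping $x_i$ whenever $S_i=0$, and $g=\sum_S\widehat{g}(S)\chi_S$). Property~3 says $\Fspan(f_B^{(1)})=\spann\{e_{t+1},\dots,e_r\}$, so $\supp(\widehat{f_B^{(1)}})\subseteq\spann\{e_{t+1},\dots,e_r\}$, and therefore every $S$ in that support has $S_i=0$ for all $i\notin\{t+1,\dots,r\}$. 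By the fact just recalled, $f_B^{(1)}$ depends only on $x_{t+1},\dots,x_r$; in particular it is independent of $x_2,\dots,x_t$, which is the first assertion.

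For the parenthetical consequence, fix $i\in\{2,\dots,t\}$. Since $f_B^{(1)}$ is independent of $x_i$, substituting \emph{any} value for $x_i$ leaves the function unchanged; substituting the particular value used in the restriction operator gives $f_B^{(1i)}=f_B^{(1)}$. Finally $f_B^{(1i)}=f_B^{(i1)}$ holds by definition, since both are the restriction $f_B^{(\{1,i\})}$ obtained by fixing the coordinates in the set $\{1,i\}$, and in the definition of $f^{(U)}$ the value assigned to a coordinate does not depend on the order in which the coordinates are listed. Hence $f_B^{(1)}=f_B^{(i1)}=f_B^{(1i)}$ for every $i\in\{2,\dots,t\}$.

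I do not expect a genuine obstacle here: the proof is pure bookkeeping, and the only point requiring a (one-line) justification is the passage from ``the Fourier span is spanned by $\{e_{t+1},\dots,e_r\}$'' to ``the Fourier support is contained in $\spann\{e_{t+1},\dots,e_r\}$'', which is immediate because a set is always contained in its own span.
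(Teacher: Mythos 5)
Your proof is correct and matches the paper's reasoning: the paper simply asserts the corollary as an immediate consequence of Property~3 of Observation~\ref{obs:mainobservation}, and your write-up just makes explicit the routine steps (support contained in $\spann\{e_{t+1},\dots,e_r\}$ implies no dependence on the other variables, and fixing a variable a function does not depend on leaves it unchanged, so $f_B^{(\{1,i\})}=f_B^{(1)}$). Nothing further is needed.
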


			We now show that not only $f_B^{(1)}$, but all the functions $f_B^{(2)},\ldots,f_B^{(t)}$ are independent of $x_2, \dots, x_t$.

			\begin{claim}
				\label{claim:assign2tot}
				For all $i \in \{2, \dots, t\}$, $f_B^{(i)}$ is a function of $\{x_1, x_{t+1}, \dots, x_r\}$ and independent of $x_2, \dots, x_t$.
			\end{claim}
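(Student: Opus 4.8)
The plan is to prove the claim by pinning down the Fourier span of $f_B^{(i)}$ exactly, for each fixed $i\in\{2,\dots,t\}$; I will abbreviate $g=f_B$. The one elementary fact I will lean on is that restricting a variable can only shrink the Fourier structure: if $h^{(\ell)}$ is obtained from $h$ by fixing $x_\ell$, then $\widehat{h^{(\ell)}}(S)=\widehat{h}(S)\pm\widehat{h}(S+e_\ell)$ for every $S$ with $S_\ell=0$, so $\supp(\widehat{h^{(\ell)}})$ lies in the image of $\supp(\widehat{h})$ under the coordinate projection $\pi_\ell$ that zeroes out coordinate~$\ell$; taking spans, $\Fspan(h^{(\ell)})\subseteq\pi_\ell(\Fspan(h))$, and in particular restriction never increases the Fourier dimension.

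First I would observe $g^{(i1)}=g^{(1)}$: fixing the two distinct variables $x_i,x_1$ commutes, so $g^{(i1)}=(g^{(i)})^{(1)}=(g^{(1)})^{(i)}$, and since $i\in\{2,\dots,t\}$, Corollary~\ref{cor:commutative} says $g^{(1)}$ does not depend on $x_i$, whence $(g^{(1)})^{(i)}=g^{(1)}$. Next I would determine $\Fdim(g^{(i)})$. Property~3 of Observation~\ref{obs:mainobservation} gives $\Fspan(g^{(1)})=\spann\{e_{t+1},\dots,e_r\}$, so $\Fdim(g^{(1)})=r-t$. Since $g^{(i1)}$ is a restriction of $g^{(i)}$ and equals $g^{(1)}$, the elementary fact yields $\Fdim(g^{(i)})\geq\Fdim(g^{(i1)})=\Fdim(g^{(1)})=r-t$, while Property~2 gives the matching upper bound $\Fdim(g^{(i)})\leq r-t$; hence $\Fdim(g^{(i)})=r-t$.

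The crux is then locating $\Fspan(g^{(i)})$. Let $\pi$ be the projection zeroing out coordinate~$1$. Applying the elementary fact to the restriction $g^{(i1)}=(g^{(i)})^{(1)}$ gives $\Fspan(g^{(i1)})\subseteq\pi(\Fspan(g^{(i)}))$. The right-hand side has dimension at most $\Fdim(g^{(i)})=r-t$, whereas the left-hand side equals $\spann\{e_{t+1},\dots,e_r\}$, of dimension exactly $r-t$; therefore $\pi(\Fspan(g^{(i)}))=\spann\{e_{t+1},\dots,e_r\}$. Consequently every $S\in\Fspan(g^{(i)})$ satisfies $S=\pi(S)+S_1e_1\in\spann\{e_1,e_{t+1},\dots,e_r\}$, so $\supp(\widehat{g^{(i)}})\subseteq\spann\{e_1,e_{t+1},\dots,e_r\}$; equivalently, $f_B^{(i)}$ is a function of $\{x_1,x_{t+1},\dots,x_r\}$ and is independent of $x_2,\dots,x_t$, which is exactly the claim.

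I expect the main obstacle to be this last step. The tempting shortcut of relating $\supp(\widehat{g^{(i1)}})$ to $\supp(\widehat{g^{(i)}})$ directly fails, since fixing $x_1$ can produce cancellations that delete elements of the support; one has to argue at the level of linear spans and dimensions, combining the \emph{exact} value $\Fdim(g^{(i)})=r-t$ from the previous step with the \emph{exact} identification $\Fspan(g^{(i1)})=\spann\{e_{t+1},\dots,e_r\}$ to pin the projected span down to precisely that subspace. The commutativity step and the ``restriction does not increase $\Fdim$'' fact are routine. One bookkeeping point to confirm along the way is that $g$ (hence each of $g^{(i)},g^{(1)},\dots$) is genuinely a function of $x_1,\dots,x_r$ only, which is part of the setup of Observation~\ref{obs:mainobservation} and is already used implicitly in Corollary~\ref{cor:commutative}.
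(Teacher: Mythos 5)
Your proof is correct and takes essentially the same route as the paper: both use $f_B^{(i1)}=f_B^{(1)}$ (Corollary~\ref{cor:commutative}) together with the dimension bound $\Fdim(f_B^{(i)})\leq r-t$ from Property~2 of Observation~\ref{obs:mainobservation} to force $\Fspan(f_B^{(i)})\subseteq\spann\{e_1,e_{t+1},\dots,e_r\}$. The only cosmetic difference is that the paper exhibits $r-t$ explicit linearly independent support elements $y_\ell\in\{e_\ell,e_1+e_\ell\}$ via Property~4, whereas you run the same dimension count at the level of spans through the projection that zeroes out coordinate~1.
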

			
			\begin{proof}
				Without loss of generality, let $i = 2$. By Observation~\ref{obs:mainobservation} (property 4), the character functions $\chi_{t+1},\ldots,\chi_r$ are present in the Fourier expansion of $f_B^{(1)}$.
				We have $f_B^{(21)}=f_B^{(1)}$ by Corollary~\ref{cor:commutative}. Hence, for every $\ell\in \{t+1,\ldots,r\}$, at least one of the characters $\chi_\ell$ or $\chi_1\chi_{\ell}$ is present in the Fourier expansion of $f_B^{(2)}$. Let $y_\ell$ be $\chi_\ell$ or $\chi_1\chi_\ell$ (depending on which character function is present in the Fourier expansion of $f_B^{(2)}$). Note that the $r-t$ character functions $y_{t+1},\ldots,y_{r}$ are linearly independent. By Observation~\ref{obs:mainobservation} (Property $2$), we have $\Fdim(f_B^{(2)})\leq r-t$, which implies 
				$
				\Fspan({f_B^{(2)}}) \subseteq \spann \{y_{t+1}, \ldots, y_{r} \}
				$ and $f_B^{(2)}$ is independent of $\{x_2,\ldots,x_t\}$. The same argument shows that for every $i,k \in \{2,\ldots,t\}$, $f_B^{(i)}$ is independent of~$x_k$.
			\end{proof}
			\vspace{-8pt}
			
			\begin{claim}\label{claim:assignrest}
				There exists an assignment of $(x_1, x_{t+1}, \dots, x_r)$ to $(a_1,a_{t+1}\ldots,a_r)$ in $f_B$ such that the resulting function depends on \emph{all} variables $x_2,\ldots,x_t$.\footnote{Observe that in this assignment, we have $x_1=(1-\sign(\widehat{f}(1)))/2$. Otherwise, by assigning $x_1=(1+\sign(\widehat{f}(1)))/2$ in $f_B$, we would obtain the function $f_B^{(1)}$ which we know is independent of $\{x_2,\ldots,x_t\}$ by Corollary~\ref{cor:commutative}.}
			\end{claim}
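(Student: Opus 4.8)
The plan is to argue by contradiction, reducing the claim to the structural facts already in hand plus the fact that $\Fdim(f_B)=r$. The case $t\le 1$ is vacuous, as then there are no variables among $x_2,\dots,x_t$, so assume $t\ge 2$. Write $c_j$ for the value to which $x_j$ is fixed in the definition of $f_B^{(j)}$, and $\bar c_j$ for the complementary bit. First observe that the assignment we are after must set $x_1=\bar c_1$: fixing $x_1=c_1$ produces $f_B^{(1)}$, which by Corollary~\ref{cor:commutative} is independent of $x_2,\dots,x_t$, and any further restriction of it stays independent of those variables. So let $g:=f_B|_{x_1=\bar c_1}$, viewed as a function of $x_2,\dots,x_r$, and for an assignment $z$ to the block $(x_{t+1},\dots,x_r)$ put $h:=g|_z$, a function of $(x_2,\dots,x_t)$. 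It then suffices to produce one $z$ for which $h$ depends on all of $x_2,\dots,x_t$.

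The key structural step is the following dichotomy: \emph{for every $z$, the function $h$ is either constant or depends on all of $x_2,\dots,x_t$}. For $t=2$ this is immediate, since $h$ depends on the single variable $x_2$. For $t\ge 3$ I would invoke Claim~\ref{claim:assign2tot}: for each $j\in\{2,\dots,t\}$ the function $g|_{x_j=c_j}=f_B^{(j)}|_{x_1=\bar c_1}$ is a function of $(x_{t+1},\dots,x_r)$ alone, so $h$ restricted to the subcube $\{x_j=c_j\}$ is a constant $v_j$. Because $t\ge 3$, any two of the subcubes $\{x_j=c_j\}$ intersect, which forces $v_2=\dots=v_t=:v$; moreover their union is everything except the single point $y^\ast$ at which $x_j=\bar c_j$ for all $j\in\{2,\dots,t\}$. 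Hence $h\equiv v$ off $y^\ast$. If $h(y^\ast)=v$ then $h$ is constant; otherwise, flipping any coordinate at $y^\ast$ lands in some subcube where $h=v\neq h(y^\ast)$, so $h$ depends on all of $x_2,\dots,x_t$. This proves the dichotomy.

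Now suppose, toward a contradiction, that no assignment of $(x_1,x_{t+1},\dots,x_r)$ makes the restricted function depend on all of $x_2,\dots,x_t$. By the dichotomy, $h=g|_z$ is constant for \emph{every} $z$, so $g$ is independent of $x_2,\dots,x_t$, and in particular of $x_2$. Since $f_B^{(1)}=f_B|_{x_1=c_1}$ is also independent of $x_2$ (Corollary~\ref{cor:commutative}), and $f_B$ agrees with $f_B^{(1)}$ on $\{x_1=c_1\}$ and with $g$ on $\{x_1=\bar c_1\}$, the function $f_B$ is independent of $x_2$; hence the restriction $f_B^{(2)}$ equals $f_B$, so $\Fdim(f_B^{(2)})=\Fdim(f_B)=r$ (the last equality because $B$ is invertible, by Lemma~\ref{lemma:Fouriercoeffmatrixprod}). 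But property~2 of Observation~\ref{obs:mainobservation} gives $\Fdim(f_B^{(2)})\le r-t\le r-2$, a contradiction. Therefore some $z$ makes $g|_z$ non-constant, and by the dichotomy this $g|_z$, namely $f_B$ with $x_1$ and the block $(x_{t+1},\dots,x_r)$ set to $\bar c_1$ and $z$ respectively, depends on all of $x_2,\dots,x_t$, which is the required assignment.

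I expect the main obstacle to be the dichotomy in the case $t\ge 3$: the crucial point, extracted from Claim~\ref{claim:assign2tot}, is that fixing \emph{any single} one of $x_2,\dots,x_t$ to its value $c_j$ collapses $g$ to a function of $(x_{t+1},\dots,x_r)$ only, so every restriction $g|_z$ must coincide with one fixed constant on the union of the coordinate hyperplanes $\{x_j=c_j\}$ — that is, everywhere but one point. Once this ``punctured-constant'' picture is established, both the dichotomy and the final contradiction are routine.
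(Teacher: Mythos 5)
Your proof is correct and shares the paper's overall contradiction strategy, but it closes the argument differently. Your ``constant off a single point'' dichotomy is, in substance, the paper's key observation: the paper argues that if some assignment $z$ to $(x_1,x_{t+1},\dots,x_r)$ leaves a function independent of even one $x_i$ with $i\in\{2,\dots,t\}$, then that function is constant, because $f_{B,z}=f_{B,z}^{(i)}=(f_B^{(i)})_z$ and $f_B^{(i)}$ depends only on $\{x_1,x_{t+1},\dots,x_r\}$ by Claim~\ref{claim:assign2tot}; your hyperplane-by-hyperplane computation is an equivalent (slightly more explicit) way of getting the same dichotomy, and your handling of $t\le 1$ and of $x_1=\bar c_1$ matches the paper's footnote. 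The genuine divergence is the final contradiction. The paper rules out ``all restrictions constant'' by invoking $\widehat{f_B}(j)\neq 0$ for $j\in\{2,\dots,t\}$ (nonzero influence of those variables), a property stated in Lemma~\ref{lem:mainlemma} but not obviously guaranteed by the construction in Observation~\ref{obs:mainobservation}, where the columns $c_2,\dots,c_t$ are chosen arbitrarily. You instead deduce that $f_B$ would be independent of $x_2$, so $\Fdim(f_B^{(2)})=\Fdim(f_B)=r$ (using Lemma~\ref{lemma:Fouriercoeffmatrixprod} and invertibility of $B$), contradicting Property~2 of Observation~\ref{obs:mainobservation}, which forces $\Fdim(f_B^{(2)})\le r-t\le r-2$. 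This is a valid and arguably more robust closing step: it needs only the dimension bound already proved in the Observation rather than the nonvanishing of the Fourier coefficients $\widehat{f_B}(2),\dots,\widehat{f_B}(t)$, and thus sidesteps a soft spot in the paper's own justification; the paper's version is shorter but leans on that extra hypothesis.
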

			
			\begin{proof}
				Before proving the claim we first make the following observation. Let us consider an assignment of $(x_1, x_{t+1}, \dots, x_r)=z$ in $f_B$ and assume that the resulting function $f_{B,z}$ is independent of $x_i$  for some $i\in \{2,\ldots,t\}$. Let us assign $x_i  = (1+\sign(\widehat{f_B}(i)))/2$ in $f_{B,z}$ and call the resulting function $f_{B,z}^{(i)}$. Firstly, $f_{B,z}^{(i)}=f_{B,z}$ since $f_{B,z}$ was independent of $x_i$.  Secondly, observe that $f_{B,z}=f_{B,z}^{(i)}$ could have alternatively been obtained by \emph{first} fixing $x_i=(1+\sign(\widehat{f}(i)))/2$ in $f_{B}$ and then fixing $(x_1,x_{t+1}\ldots,x_r)=~z$. In this case, by Claim~\ref{claim:assign2tot}, after fixing $x_i$ in  $f_{B}$, $f_B^{(i)}$ is independent of $x_2,\ldots,x_t$ and after fixing $(x_1, x_{t+1}, \dots, x_r)=z$, $f_{B,z}$ is a constant. This in particular shows that if there exists a $z$ such that $f_{B,z}$ is independent of~$x_i$ for some $i\in \{2,\ldots,t\}$, then $f_{B,z}$ is also independent of $x_2,\ldots,x_t$. 			
				
				Towards a contradiction, suppose that for every assignment of $(x_1, x_{t+1}, \dots, x_r)=z$ to $f_B$, the resulting function $f_{B,z}$ is independent of $x_i$, for some $i\in \{2,\ldots,t\}$. Then by the argument in the previous paragraph, for every assignment $z$, $f_{B,z}$ is also independent of $x_k$ for \emph{every} $k\in \{2,\ldots,t\}$. This, however, contradicts the fact that $x_2,\ldots,x_t$ had non-zero influence on $f_B$ (since $B$ was chosen such that $\widehat{f_B}(j)\neq 0$ for every $j\in [r]$ in Lemma~\ref{lem:mainlemma}). This implies the existence of an assignment $(x_1,x_{t+1},\ldots,x_r)=(a_1,a_{t+1}\ldots,a_r)$, such that the resulting function depends on all the variables~$x_2,\ldots,x_t$.
			\end{proof}
\vspace{-8pt}

			We now argue that the assignment in Claim~\ref{claim:assignrest} results in a function which resembles the $\AND$ function on $x_2,\ldots,x_t$, and hence has Fourier sparsity~$2^{t-1}$.
			
			\begin{claim}\label{claim:assigand}
				Consider the assignment $(x_1,x_{t+1},\ldots,x_r)=(a_1,a_{t+1}\ldots,a_r)$ in $f_B$ as in Claim~\ref{claim:assignrest}, then the resulting function $g$ equals (up to possible negations of input and output bits) the $(t-1)$-bit $\AND$ function.
			\end{claim}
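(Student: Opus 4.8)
The plan is to combine the two structural facts already established. By Claim~\ref{claim:assign2tot}, for each $i\in\{2,\dots,t\}$ the function $f_B^{(i)}$, obtained by fixing $x_i=(1+\sign(\widehat{f}(i)))/2$ in $f_B$, is independent of $x_2,\dots,x_t$; and since $f_B$ has only the $r$ influential variables $x_1,\dots,x_r$, this means $f_B^{(i)}$ is a function of $x_1,x_{t+1},\dots,x_r$ alone. By Claim~\ref{claim:assignrest}, $g$ is obtained from $f_B$ by fixing $(x_1,x_{t+1},\dots,x_r)=(a_1,a_{t+1},\dots,a_r)$ and depends on \emph{all} of $x_2,\dots,x_t$. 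First I would observe that restricting variables commutes: for any $i\in\{2,\dots,t\}$, fixing $x_i=(1+\sign(\widehat{f}(i)))/2$ in $g$ yields the same function as first forming $f_B^{(i)}$ and then fixing $(x_1,x_{t+1},\dots,x_r)=(a_1,a_{t+1},\dots,a_r)$, because the index sets $\{i\}$ and $\{1,t+1,\dots,r\}$ are disjoint. As $f_B^{(i)}$ depends only on $x_1,x_{t+1},\dots,x_r$, that second restriction leaves a constant. Hence $g$ is a Boolean function of the $t-1$ bits $x_2,\dots,x_t$ that depends on all of them, yet becomes constant as soon as any single variable $x_i$ is fixed to the value $b_i:=(1+\sign(\widehat{f}(i)))/2$.

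After negating the $i$-th input of $g$ whenever $b_i=0$, I may assume $b_i=1$ for all $i\in\{2,\dots,t\}$, so $g$ is constant on each of the $t-1$ subcubes $C_i:=\{x:x_i=1\}$, say with value $c_i\in\pmset{}$. When $t-1\geq 2$, any two distinct $C_i,C_j$ intersect (in a face of codimension~$2$), which forces $c_i=c_j$; call the common value $c$. Since $\bigcup_{i=2}^{t}C_i$ consists of every input except the single point $p$ at which $x_2=\dots=x_t=0$, we get $g\equiv c$ off $p$. If also $g(p)=c$, then $g$ is constant, contradicting that it depends on all of $x_2,\dots,x_t$; hence $g(p)=-c$. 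A Boolean function that equals a constant everywhere except at the all-zeros point, where it takes the opposite value, is precisely the $(t-1)$-bit $\AND$ function up to negating the output (the inputs having already been negated); this proves the claim for $t\geq 3$. The small cases are immediate: for $t=2$, $g$ is a one-bit function depending on its variable, hence $g=\pm\chi_{\{2\}}$, which is the $1$-bit $\AND$ up to negations; and for $t=1$ there are no variables $x_2,\dots,x_t$ at all, so $g$ is a constant, the (empty) $0$-bit $\AND$ up to an output negation.

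The two steps requiring care are the commutation of restrictions combined with Claim~\ref{claim:assign2tot}, and the fact that the constants $c_i$ on the subcubes $C_i$ all coincide; the latter genuinely uses $t-1\geq 2$ (so that two such faces meet), which is why $t\leq 2$ is handled separately. Everything else is routine bookkeeping. Once the claim is in hand, writing the $(t-1)$-bit $\AND$ in the Fourier basis shows (for $t\geq 3$) that its Fourier sparsity equals $2^{t-1}$, which is what the surrounding argument invokes.
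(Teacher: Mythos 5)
Your proof is correct and follows essentially the same route as the paper: use Claim~\ref{claim:assign2tot} together with the commutation of restrictions on disjoint variable sets to show that fixing any single $x_i$ (for $i\in\{2,\dots,t\}$) to its special value collapses $g$ to a constant, argue that these constants coincide, and conclude that the single remaining input must take the opposite value since $g$ depends on all of $x_2,\dots,x_t$, which is exactly the $\AND$ function up to negations. The only cosmetic differences are that you show the constants agree via intersecting subcubes where the paper commutes two restrictions (the same observation), and that you spell out the degenerate cases $t\leq 2$, which the paper leaves implicit.
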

			
			\begin{proof}
				By Claim~\ref{claim:assignrest}, $g$ depends on all the variables $x_2,\ldots,x_t$. This dependence is such that if \emph{any one} of the variables $\{x_i:i\in \{2,\ldots,t\}\}$ is set to $x_i = (1+\sign(\widehat{f_B}(i)))/2$, then by Claim~\ref{claim:assign2tot} the resulting function $g^{(i)}$ is independent of $x_2,\ldots,x_t$. Hence, $g^{(i)}$ is some constant $b_i \in \pmset{}$ for every $i \in \{2, \dots, t\}$. Note that these $b_i$s are all the same bit~$b$, because first fixing $x_i$ (which collapses~$g$ to the constant~$b_i$) and then $x_j$ gives the same function as first fixing $x_j$ (which collapses~$g$ to~$b_j$) and then $x_i$. Additionally, by assigning $x_i = (1-\sign(\widehat{f_B}(i)))/2$ for every $i\in\{2,\ldots,t\}$ in $g$, the resulting function must evaluate to $1-b$ because $g$ is non-constant (it depends on $x_2,\ldots,x_t$). Therefore $g$ equals (up to possible negations of input and output bits) the $(t-1)$-bit $\AND$ function.	
			\end{proof}
			
			We now conclude the proof of Lemma~\ref{lem:mainlemma}. Let $f:\01^n \to \pmset{}$ be such that $\Fdim(f)=r$. Let~$B$ be as defined in Observation~\ref{obs:mainobservation}. Consider the assignment of $(x_{t+1},\ldots,x_r)=(a_{t+1},\ldots,a_r)$ to $f_B$ as in Claim~\ref{claim:assigand}, and call the resulting function $f_B'$. From Claim~\ref{claim:assigand}, observe that by setting $x_1=a_1$ in~$f_B'$, the resulting function is $g(x_2,\ldots,x_t)$ and by setting $x_1=1-a_1$ in~$f_B'$, the resulting function is a constant. Hence $f_B'$ can be written as
			\begin{align}
			\label{eq:rewritefB}
			f_B'(x_1,\ldots,x_t,a_{t+1},\ldots,a_r)=\frac{1-(-1)^{x_1+a_1}}{2}b_{a_1,a_{t+1},\ldots,a_r} +\frac{1+(-1)^{x_1+a_1}}{2} g(x_2,\ldots,x_t),
			\end{align}
			where $b_{a_1,a_{t+1},\ldots,a_r}\in \pmset{}$ (note that it is independent of $x_2,\ldots,x_t$ by Corollary~\ref{cor:commutative}). Since $g$ essentially equals the $(t-1)$-bit $\AND$ function (by Claim~\ref{claim:assigand}), $g$ has Fourier sparsity $2^{t-1}$ and $\widehat{g}(0^{t-1})=1-2^{-t+2}$. Hence the Fourier sparsity of $f_B'$ in Eq.~\eqref{eq:rewritefB} equals $2^{t}$. Since $f_B'$ was a restriction of $f_B$, the Fourier sparsity of $f_B'$ is at most $k$, hence $t\leq \log k$. This implies $\Fdim(f_B^{(1)} )=r - t \geq r - \log k$, concluding the~proof.
		\end{proof}
		
		It remains to prove Observation~\ref{obs:mainobservation}, which we do now.

		\begin{proof}[Proof of Observation~\ref{obs:mainobservation}] 
			Let $D\in \F_2^{n\times n}$ be an invertible matrix that maximizes $\Fdim(f_{D}^{(1)})$ subject to the constraint $\widehat{f_{D}}(1) \neq 0$. Suppose $\Fdim(f_D^{(1)})=r-t$. Let $d_1,\ldots,d_{r-t}$ be a basis of  $\Fspan(f_D^{(1)})$ such that $\widehat{f_D^{(1)}}(d_i) \neq 0$ for all $i \in [r-t]$. We  now construct an invertible $C\in \F_2^{n\times n}$ whose first $r$~columns form a basis for $\Fspan(f_D)$, as follows: let $c_1=e_1$, and for $i \in [r-t]$, fix $c_{t+i} = d_i$. Next, assign vectors $c_2,\ldots,c_t$ arbitrarily from $\Fspan(f_D)$, ensuring that $c_2,\ldots,c_t$ are linearly independent from $\{c_1,c_{t+1},\ldots,c_r\}$. We then extend to a basis $\{c_1,\ldots,c_n\}$ arbitrarily. Define $C$ as $C=[c_1,\ldots,c_n]$ (where the $c_i$s are column vectors). Finally, define our desired matrix $B$ as the product $B = DC$. We now verify the properties of $B$.

			\textbf{Property 1:} Using Lemma~\ref{lemma:basis} we have 
			$$
			\widehat{f_{DC}}(1) = \widehat{f_{D}}(C e_1) = \widehat{f_{D}}(c_1) = \widehat{f_{D}}(1)\neq 0,
			$$
			where the third equality used $c_1=e_1$, and $\widehat{f_{D}}(1)\neq 0$ follows from the definition of~$D$.
			
			We next prove the following fact, which we use to verify the remaining three properties.
			
			\begin{fact}
				\label{fact:factaboutDB}
				Let $C,D$ be invertible matrices as defined above. For every $i\in [t]$, let $(f_D^{(i)})_C$ be the function obtained after applying the invertible transformation $C$ to $f_D^{(i)}$ and $(f_{DC})^{(i)}$ be the function obtained after fixing $x_i$ to $(1+\sign(\widehat{f_{DC}}(i)))/2$ in $f_{DC}$. Then $ (f_{DC})^{(i)}=(f_D^{(i)})_C $.
			\end{fact}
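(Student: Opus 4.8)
The plan is to reduce the identity to a statement about $f_D$ alone by peeling off the $D$-part of the basis change. From Definition~\ref{defn:fourierbasis}, $f_{DC}(x)=f(((DC)^{-1})^{\mathsf T}x)=f((D^{-1})^{\mathsf T}(C^{-1})^{\mathsf T}x)=f_D((C^{-1})^{\mathsf T}x)$, so $f_{DC}=(f_D)_C$. Hence it suffices to show that, applied to $f_D$, the operation ``fix the $i$-th variable to its sign-prescribed value'' commutes with the operation ``relabel the Fourier characters by $C$'', for each $i\in[t]$.

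I would do the bookkeeping at the level of Fourier coefficients. By Lemma~\ref{lemma:Fouriercoeffmatrixprod}, $\widehat{(f_D^{(i)})_C}(Q)=\widehat{f_D^{(i)}}(CQ)$ and $\widehat{f_{DC}}(T)=\widehat{f_D}(CT)$; feeding the second identity into the elementary formula for the Fourier coefficients of a one-variable restriction of a Boolean function yields, for $Q$ with $Q_i=0$,
$$
\widehat{(f_{DC})^{(i)}}(Q)=\widehat{f_D}(CQ)+(-1)^{b_i}\widehat{f_D}(CQ+c_i),\qquad\widehat{(f_D^{(i)})_C}(Q)=\widehat{f_D}(CQ)+(-1)^{a_i}\widehat{f_D}(CQ+e_i),
$$
where $a_i\in\01$ is the bit fixed in $f_D$ (determined by $\sign(\widehat{f_D}(i))$) and $b_i\in\01$ the bit fixed in $f_{DC}$ (determined by $\sign(\widehat{f_{DC}}(i))$). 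So the task becomes matching these two right-hand sides and checking that both sides vanish when $Q_i=1$.

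For $i=1$ this is quick: $c_1=e_1$ gives $Ce_1=e_1$, hence $\widehat{f_{DC}}(1)=\widehat{f_D}(c_1)=\widehat{f_D}(1)$ and so $a_1=b_1$, and $CQ+c_1=CQ+e_1$, so the two expressions literally agree (the support condition comes along once one also records that the first coordinate is untouched, which uses that the remaining columns of $C$ were taken inside the first-coordinate-zero hyperplane together with $d_1,\dots,d_{r-t}$). The case $i\in\{2,\dots,t\}$ is the real obstacle: there $c_i$ is only known to lie in $\Fspan(f_D)$, not to equal $e_i$, so one cannot compare the two expressions termwise. My plan for that case is to exploit that $f_D$ depends on its input only through $\langle S,\cdot\rangle$ for $S\in\supp(\widehat{f_D})$ --- i.e.\ $f_D$ is constant on cosets of $\Fspan(f_D)^{\perp}$ --- together with the way the basis $c_1,\dots,c_n$ was selected (the columns $c_1,\dots,c_t$ inside $\Fspan(f_D)$, $c_1=e_1$, and enough remaining freedom in the extension $c_{r+1},\dots,c_n$), to show that the slice $\{x:x_i=b_i\}$ of $f_{DC}$ pulls back under $(C^{-1})^{\mathsf T}$ to a slice on which $f_D$ agrees with its own $i$-th restriction, so that the displayed expressions still coincide and neither puts Fourier mass on $\{Q:Q_i=1\}$. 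Everything other than this last point is routine manipulation with Lemma~\ref{lemma:Fouriercoeffmatrixprod} and the restriction formula.
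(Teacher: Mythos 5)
Your reduction $f_{DC}=(f_D)_C$ and your treatment of $i=1$ follow essentially the paper's own route (the one-variable restriction formula combined with Lemma~\ref{lemma:Fouriercoeffmatrixprod}, with $c_1=Ce_1=e_1$ making the shift and the fixed bit agree), but two issues remain. The smaller one: your formula for $\widehat{(f_D^{(i)})_C}(Q)$ is missing the condition $(CQ)_i=0$ (for $Q$ with $Q_i=0$ but $(CQ)_i=1$ the coefficient is $0$, not the displayed sum), and your justification of the support condition for $i=1$ invokes a property the construction does not provide: only $c_{t+1},\ldots,c_r=d_1,\ldots,d_{r-t}$ are guaranteed to lie in $\{T: T_1=0\}$, while $c_2,\ldots,c_t$ and $c_{r+1},\ldots,c_n$ need not. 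The correct source of the support condition is $\supp(\widehat{f_D^{(1)}})\subseteq\spann\{c_{t+1},\ldots,c_r\}$, which forces $\supp(\widehat{(f_D^{(1)})_C})\subseteq\spann\{e_{t+1},\ldots,e_r\}\subseteq\{Q: Q_1=0\}$ and also kills the would-be coefficients at $Q$ with $Q_1=0$, $(CQ)_1=1$. The serious one: the case $i\in\{2,\ldots,t\}$, which the Fact explicitly claims, is only a ``plan'' in your write-up, and the plan does not close. As your own formulas show, the two sides are different restrictions of $f_D$: $(f_{DC})^{(i)}$ is the restriction of $f_D$ to the hyperplane $\{y: c_i\cdot y=b_i\}$ (shift by $c_i$, sign from $\widehat{f_D}(c_i)$), whereas $(f_D^{(i)})_C$ restricts along $\{y: y_i=a_i\}$ (shift by $e_i$, sign from $\widehat{f_D}(e_i)$). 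For $i\geq 2$ the construction ties $c_i$ to nothing: $c_2,\ldots,c_t$ are essentially arbitrary vectors of $\Fspan(f_D)$ completing a basis, so neither the constancy of $f_D$ on cosets of $\Fspan(f_D)^{\perp}$ nor the choice of $d_1,\ldots,d_{r-t}$ gives a mechanism forcing these two subfunctions to coincide; the sketched slicing argument cannot be completed from the stated hypotheses.

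In fairness, the paper's own proof only works out $i=1$ (by exactly your coefficient bookkeeping, in fact more loosely, dropping the signs and the indicator) and then asserts that ``the same proof works for every $i\in[t]$''; your sense that $i\geq 2$ is the real obstacle is accurate, since for $i\geq 2$ that manipulation would silently conflate $\widehat{f_D}(CQ\oplus e_i)$ with $\widehat{f_D}(CQ\oplus c_i)$. If you want a complete argument, the productive move is to bypass the $i\geq 2$ case rather than prove it: its only use is Property~2 of Observation~\ref{obs:mainobservation}, i.e.\ $\Fdim(f_{DC}^{(i)})\leq r-t$, and that follows directly from the maximality in the choice of $D$ by considering the matrix obtained from $B=DC$ by swapping its first and $i$th columns (this needs $\widehat{f_{DC}}(i)=\widehat{f_D}(c_i)\neq 0$, i.e.\ $c_2,\ldots,c_t$ chosen inside $\supp(\widehat{f_D})$, which is possible and is needed for Lemma~\ref{lem:mainlemma} anyway), with no commutation identity required.
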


			\textbf{Property 2:} Fact~\ref{fact:factaboutDB} implies that  $\Fdim((f_{DC})^{(i)})=\Fdim((f_D^{(i)})_C)$. Since $C$ is invertible, $\Fdim((f_D^{(i)})_C)=\Fdim(f_D^{(i)})$. From the choice of $D$, observe that for all $i\in \{2,\ldots,t\}$, 
			$$
			\Fdim(f_B^{(i)})=\Fdim(f_{DC}^{(i)})=\Fdim((f_D^{(i)})_C)=\Fdim(f_{D}^{(i)})\leq\Fdim(f_{D}^{(1)})=r-t,
			$$
			where the inequality follows by definition of~$D$.

			\textbf{Property 3:} Note that $\Fspan(f_D^{(1)})$ is contained in $\spann\{ d_1, \dots, d_{r-t} \}$ by construction. By making the invertible transformation by $C$, observe that $\Fspan((f_D^{(1)})_C)\subseteq \spann\{ e_{t+1}, \dots, e_{r} \}$ (since for all $i\in [r-t]$, we defined $c_{t+i}=d_i$). Property~3 follows because $(f_D^{(1)})_C=f_{DC}^{(1)}=f_{B}^{(1)}$ by Fact~\ref{fact:factaboutDB}.

			\textbf{Property 4:}  Using Fact~\ref{fact:factaboutDB}, for every $\ell\in \{t+1,\ldots,r\}$, we have
			$$
			\widehat{(f_{B})^{(1)}}(\ell)=\widehat{(f_{DC})^{(1)}}(\ell) = \widehat{(f_D^{(1)})_C}(\ell)=\widehat{f_{D}^{(1)}}(c_\ell).
			$$
			Since $c_{\ell}=d_{\ell-t}$, we have $\widehat{f_{D}^{(1)}}(c_\ell) = \widehat{f_{D}^{(1)}}(d_{\ell-t})$ and $\widehat{f_{D}^{(1)}}(d_1),\ldots,\widehat{f_{D}^{(1)}}(d_{r-t}) \neq 0$ by definition of $d_i$, hence the property follows.
			
	\begin{proof}[Proof of Fact~\ref{fact:factaboutDB}]
				Let $f_D = g$. We want to show that $(g^{(i)} )_{C} = (g_{C})^{(i)}$. 
				For simplicity fix $i=1$; the same proof works for every $i\in [t]$. Then,
				\[
				(g^{(1)} )(x) = \sum_{S \in \{0\}\times \01^{n-1}} (\widehat{g}(S) + \widehat{g}(S \oplus e_1)) \chi_S(x).
				\]
				On transforming $g^{(1)}$ using the basis $C$ we have:
				
				\begin{align}
				\label{eq:g1bfourier}
				\begin{aligned}
				(g^{(1)} )_{C}(x) = \sum_{S \in \{0\}\times \01^{n-1}} (\widehat{g}(CS) + \widehat{g}(C(S \oplus e_1)) \chi_S(x). 
				\end{aligned}
				\end{align}
				Consider the function $g_C$. The Fourier expansion of $g_C$ is $
				g_C(y) = \sum_{S \in \01^n} \widehat{g}(CS) \chi_S(y)$ and the Fourier expansion of the  $(g_C)^{(1)}$ can be written as
				\begin{align}
				\label{eq:gb1fourier}
				g_{C}^{(1)} (y) &=  \sum_{S \in \{0\}\times \01^{n-1}} (\widehat{g}(CS) + \widehat{g}(CS \oplus Ce_1)) \chi_S(y).
				\end{align}
				Using Eq.~\eqref{eq:g1bfourier},~\eqref{eq:gb1fourier}, we conclude that $(g^{(1)} )_{C} = (g_{C})^{(1)}$, concluding the proof of the fact.
			\end{proof}
			This concludes the proof of the observation.
		\end{proof}
		This concludes the proof of the theorem.

	\section{Quantum vs classical membership queries}\label{sec:QvsD}
	In this section we assume we can access the target function using membership queries rather than examples.
	Our goal is to simulate quantum exact learners for a concept class $\Cc$ by classical exact learners, without using many more membership queries. A key tool here will be the (``nonnegative'' or ``positive-weights'') adversary method. This was introduced by Ambainis~\cite{ambainis:lowerboundsj}; here we will use the formulation of Barnum et al.~\cite{bss:semidef}, which is called the ``spectral adversary'' in the survey~\cite{spalek&szegedy:adversary}. 
	
	Let $\Cc\subseteq\01^N$ be a set of strings. If $N=2^n$ then we may view such a string $c\in\Cc$ as (the truth-table of) an $n$-bit Boolean function, but in this section we do not need the additional structure of functions on the Boolean cube and may consider any positive integer~$N$.
	Suppose we want to identify an unknown $c\in \Cc$ with success probability at least $2/3$ (i.e., we want to compute the identity function on $\Cc$). The required number of quantum queries to $c$ can be lower bounded as follows. Let $\Gamma$ be a $|\Cc|\times|\Cc|$ matrix with real, nonnegative entries and 0s on the diagonal (called an ``adversary matrix''). Let $D_i$ denote the $|\Cc|\times|\Cc|$ 0/1-matrix whose $(c,c')$-entry is $[c_i\neq c'_i]$.\footnote{The bracket-notation $[P]$ denotes the truth-value of proposition $P$.} 
	Then it is known that at least (a constant factor times) $\norm{\Gamma}/\max_{i\in[N]}\norm{\Gamma\circ D_i}$ quantum queries are needed, where $\norm{\cdot}$ denotes operator norm (largest singular value) and `$\circ$' denotes entrywise product of matrices. Let 
	$$
	\ADV(\Cc)=\max_{\Gamma\geq 0}\frac{\norm{\Gamma}}{\max_{i\in[N]}\norm{\Gamma\circ D_i}}
	$$ 
	denote the best-possible lower bound on $Q(\Cc)$ that can be achieved this way.
	
	The key to our classical simulation is the next lemma. It shows that if $Q(\Cc)$ (and hence~$\ADV(\Cc)$) is small, then there is a query that splits the concept class in a ``mildly balanced'' way.
	
	\begin{lemma}\label{lem:goodi}
		Let $\Cc\subseteq\01^N$ be a concept class and 
		$$
		    \ADV(\Cc)= \max_{\Gamma\geq 0}\frac{\norm{\Gamma}}{\max_{i\in[N]} \norm{\Gamma\circ D_i}}
		$$
		be the nonnegative adversary bound for the exact learning problem corresponding to $\Cc$. Let $\mu$ be a distribution on~$\Cc$ such that $\max_{c\in\Cc}\mu(c)\leq 5/6$.
		Then there exists an $i\in[N]$ such that 
		$$
		\min(\mu(C_i=0),\mu(C_i=1))\geq \frac{1}{36\ADV(\Cc)^2}.
		$$
	\end{lemma}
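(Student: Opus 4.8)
The plan is to argue by contradiction: suppose every coordinate $i\in[N]$ is ``very unbalanced'' under $\mu$, meaning $\min(\mu(C_i=0),\mu(C_i=1)) < \frac{1}{36\ADV(\Cc)^2} =: \tau$. I will use this assumption to build an adversary matrix $\Gamma$ that certifies $\ADV(\Cc)$ is \emph{larger} than it actually is, a contradiction. The natural choice is $\Gamma = \mu\mu^{\mathsf T} - \diag(\mu)^2$ (i.e., $\Gamma(c,c')=\mu(c)\mu(c')$ off the diagonal, $0$ on it), so that $\Gamma$ is nonnegative with zero diagonal as required. Its largest singular value is essentially $\norm{\mu}_2^2 = \sum_c \mu(c)^2$, and since $\max_c\mu(c)\le 5/6$ this is bounded away from $1$ — concretely $\norm{\Gamma}\ge \norm{\mu}_2^2 - \max_c\mu(c)^2 \ge$ some constant like $1 - (5/6) = 1/6$ after accounting for the diagonal subtraction (this is where the $5/6$ hypothesis is used: we need $\Gamma$ to have non-negligible norm).

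The second ingredient is to upper bound $\norm{\Gamma\circ D_i}$ for each $i$. Here $(\Gamma\circ D_i)(c,c') = \mu(c)\mu(c')\cdot[c_i\neq c'_i]$, which is a rank-one-ish bipartite object: writing $\mu_0^{(i)}$ for the vector $\mu$ restricted to $\{c: c_i=0\}$ and $\mu_1^{(i)}$ for $\mu$ restricted to $\{c:c_i=1\}$, the matrix $\Gamma\circ D_i$ is (up to reindexing) $\begin{psmallmatrix}0 & \mu_0^{(i)}(\mu_1^{(i)})^{\mathsf T}\\ \mu_1^{(i)}(\mu_0^{(i)})^{\mathsf T} & 0\end{psmallmatrix}$, whose operator norm is exactly $\norm{\mu_0^{(i)}}_2\norm{\mu_1^{(i)}}_2$. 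Now $\norm{\mu_b^{(i)}}_2^2 \le \max_c\mu(c)\cdot \mu(C_i=b) \le \mu(C_i=b)$, so $\norm{\Gamma\circ D_i} \le \sqrt{\mu(C_i=0)\,\mu(C_i=1)} \le \sqrt{\mu(C_i=0)\,\mu(C_i=1)}$. Under the contradiction hypothesis, the smaller of $\mu(C_i=0),\mu(C_i=1)$ is below $\tau$, and the larger is at most $1$, so $\norm{\Gamma\circ D_i} \le \sqrt{\tau}$.

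Putting the pieces together, $\ADV(\Cc) \ge \norm{\Gamma}/\max_i\norm{\Gamma\circ D_i} \ge (1/6)/\sqrt{\tau} = (1/6)\cdot 6\ADV(\Cc) = \ADV(\Cc)$, and chasing the constants slightly more carefully (the $36$ is chosen precisely to make this work with a strict inequality, using that $\norm{\Gamma} > c_0$ strictly for some explicit $c_0$ when $\max_c\mu(c)\le 5/6$) yields $\ADV(\Cc) > \ADV(\Cc)$, the desired contradiction. The main obstacle I anticipate is pinning down the constant in $\norm{\Gamma} \ge c_0$: one must handle the subtraction of the diagonal $\diag(\mu)^2$ carefully, since $\norm{\mu\mu^{\mathsf T}} = \norm{\mu}_2^2$ could itself be as small as, say, $1/|\Cc|$ if $\mu$ is uniform — wait, no, $\norm{\mu}_2^2 \ge 1/|\Cc|$ only when $\mu$ is spread out, so actually the lower bound $\norm{\Gamma}\ge c_0$ is \emph{not} automatic and the $5/6$ bound alone does not give it. The fix is to instead lower bound $\norm{\Gamma}$ via a test vector, e.g. $\norm{\Gamma} \ge v^{\mathsf T}\Gamma v / \norm{v}_2^2$ with $v = \mathbf{1}$ the all-ones vector, giving $\norm{\Gamma} \ge \big((\sum_c\mu(c))^2 - \sum_c\mu(c)^2\big)/|\Cc| = (1 - \norm{\mu}_2^2)/|\Cc|$ — still $|\Cc|$-dependent. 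So the real subtlety is that a \emph{single} $\Gamma$ of this form may not suffice; one likely needs $\Gamma(c,c') = \sqrt{\mu(c)\mu(c')}$ (entrywise square root), whose all-ones quadratic form is $(\sum_c\sqrt{\mu(c)})^2$ — also not obviously constant. I expect the correct construction, and the crux of the proof, is to choose $\Gamma$ adapted to the hypothetical unbalancedness — perhaps taking $\Gamma$ supported only on a well-chosen heavy sub-collection of $\Cc$, or normalizing differently — so that both $\norm{\Gamma}$ is a universal constant and $\norm{\Gamma\circ D_i}\le\sqrt{\tau}$ for all $i$; getting these two normalizations to coexist is the heart of the argument.
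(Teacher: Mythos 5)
There is a genuine gap: you never arrive at a matrix for which both norm bounds can be verified, and you end by conceding that the ``crux'' (getting $\norm{\Gamma}\geq$ constant and $\norm{\Gamma\circ D_i}\leq\sqrt{\tau}$ simultaneously) is still missing. Your first choice $\Gamma(c,c')=\mu(c)\mu(c')$ fails for exactly the reason you yourself spot: $\norm{\mu\mu^{\mathsf T}}=\norm{\mu}_2^2$ can be as small as $1/|\Cc|$, so the hypothesis $\max_c\mu(c)\leq 5/6$ gives no constant lower bound on $\norm{\Gamma}$. You then mention the right matrix, $\Gamma(c,c')=\sqrt{\mu(c)\mu(c')}$ off the diagonal (this is precisely the paper's construction, $\Gamma=vv^{\mathsf T}-\diag(\mu)$ with $v_c=\sqrt{\mu(c)}$), but you dismiss it because you try to lower bound its norm with the all-ones test vector, getting $(\sum_c\sqrt{\mu(c)})^2$ divided by $|\Cc|$. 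That is the wrong test vector: $vv^{\mathsf T}$ is rank one, so $\norm{vv^{\mathsf T}}=\norm{v}_2^2=\sum_c\mu(c)=1$ \emph{unconditionally} (equivalently, use $v$ itself as the test vector), while $\norm{\diag(\mu)}=\max_c\mu(c)\leq 5/6$, giving $\norm{\Gamma}\geq 1-5/6=1/6$ with no dependence on $|\Cc|$. This is the single observation your argument is missing.

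Once you have it, the rest of your sketch goes through essentially verbatim and matches the paper: splitting $v$ into $v_0,v_1$ according to the value of $C_i$, the matrix $\Gamma\circ D_i$ is the off-diagonal block matrix built from $v_0v_1^{\mathsf T}$ and $v_1v_0^{\mathsf T}$, so $\norm{\Gamma\circ D_i}=\norm{v_0}_2\,\norm{v_1}_2=\sqrt{\mu(C_i=0)\,\mu(C_i=1)}$ exactly (with the square-root weights there is no need for your intermediate bound $\norm{\mu_b^{(i)}}_2^2\leq\mu(C_i=b)$, which was specific to the first, failed choice). Since $\Gamma$ is a valid adversary matrix, the definition of $A=\ADV(\Cc)$ forces some $i$ with $\norm{\Gamma\circ D_i}\geq\norm{\Gamma}/A\geq 1/(6A)$, hence $\mu(C_i=0)\mu(C_i=1)\geq 1/(36A^2)$, and bounding the larger factor by $1$ gives the claim directly — no contradiction framing or further constant-chasing is needed. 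So your overall strategy is the right one, but as written the proof is incomplete at its central step.
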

	
	\begin{proof}
		Define unit vector $v\in\mathbb{R}_+^{|\Cc|}$ by $v_c=\sqrt{\mu(c)}$, and adversary matrix 
		$$
		\Gamma=vv^* - \diag(\mu),
		$$
		where $\diag(\mu)$ is the diagonal matrix that has the entries of $\mu$ on its diagonal.
		This~$\Gamma$ is a nonnegative matrix with 0 diagonal (and hence a valid adversary matrix for the exact learning problem), and $\norm{\Gamma}\geq\norm{vv^*}-\norm{\diag(\mu)}\geq 1-5/6=1/6$. 
		Abbreviate $A=\ADV(\Cc)$.
		By definition of $A$, we have for this particular $\Gamma$
		$$
		A\geq \frac{\norm{\Gamma}}{\max_i\norm{\Gamma\circ D_i}}\geq \frac{1}{6\max_i\norm{\Gamma\circ D_i}},
		$$
		hence there exists an $i\in[N]$ such that $\norm{\Gamma\circ D_i}\geq\frac{1}{6A}$.
		We can write $v=\left(\begin{array}{c}v_0\\ v_1\end{array}\right)$ where the entries of~$v_0$ are the ones corresponding to $C$s where $C_i=0$, and the entries of $v_1$ are the ones where $C_i=1$. Then
		$$
		\Gamma=\left(\begin{array}{cc}
		v_0v_0^* & v_0v_1^*\\
		v_1v_0^* & v_1v_1^*
		\end{array}
		\right)-\diag(\mu)
		\mbox{~~~~and~~~~}
		\Gamma\circ D_i=\left(\begin{array}{cc}
		0 & v_0v_1^*\\
		v_1v_0^* & 0
		\end{array}
		\right).
		$$ 
		It is easy to see that $\norm{\Gamma\circ D_i}=\norm{v_0}\cdot\norm{v_1}$.
		Hence
		$$
		\frac{1}{36A^2}\leq \norm{\Gamma\circ D_i}^2=\norm{v_0}^2\norm{v_1}^2 = \mu(C_i=0)\mu(C_i=1)\leq \min(\mu(C_i=0),\mu(C_i=1)),
		$$ 
		where the last inequality used $\max(\mu(C_i=0),\mu(C_i=1))\leq 1$.
	\end{proof}
	
	Note that if we query the index $i$ given by this lemma and remove from $\Cc$ the strings that are inconsistent with the query outcome, then we reduce the size of $\Cc$ by a factor $\leq 1-\Omega(1/\ADV(\Cc)^2)$. Repeating this $O(\ADV(\Cc)^2\log|\Cc|)$ times would reduce the size of $\Cc$ to~1, completing the learning task.
	However, we will see below that analyzing the same approach in terms of entropy gives a somewhat better upper bound on the number of queries.
	
	\begin{theorem}\label{th:DvsADV}
		Let $\Cc\subseteq\01^N$ be a concept class and 
		$$
		    \ADV(\Cc)= \max_{\Gamma\geq 0}\frac{\norm{\Gamma}}{\max_{i\in[N]} \norm{\Gamma\circ D_i}}
		$$
		be the nonnegative adversary bound for the exact learning problem corresponding to $\Cc$.
		Then there exists a classical learner for $\Cc$ using $\displaystyle O\left(\frac{\ADV(\Cc)^2}{\log \ADV(\Cc)}\log|\Cc|\right)$ membership queries that identifies the target concept with probability $\geq 2/3$.
	\end{theorem}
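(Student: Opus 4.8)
The plan is to prove the bound for a random target drawn from an arbitrary distribution and then pass to the worst-case guarantee by von Neumann's minimax theorem (Yao's principle) applied to depth-$t$ decision trees computing the ``identity'' function on~$\Cc$. Thus it suffices to exhibit, for \emph{every} distribution $\mu$ on $\Cc$, a deterministic adaptive strategy of depth $t=O\!\big(\tfrac{\ADV(\Cc)^2}{\log\ADV(\Cc)}\log|\Cc|\big)$ that outputs the correct concept with probability $\ge 2/3$ when the target is $\mu$-distributed; minimax then yields a randomized strategy of the same depth with error $\le 1/3$ on every fixed target, which is the desired classical learner. (We may assume $\ADV(\Cc)$ exceeds a large absolute constant, since otherwise even the crude ``keep shrinking the consistent set'' argument of the remark after Lemma~\ref{lem:goodi} already gives $O(\ADV(\Cc)^2\log|\Cc|)=O(\log|\Cc|)$ queries.) The strategy $\mathcal{T}_\mu$ maintains the Bayesian posterior $\mu_\tau$ on the target given the transcript $\tau$ of queried bits so far. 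While $\max_c\mu_\tau(c)<5/6$ and fewer than $t$ queries have been made, Lemma~\ref{lem:goodi} applies to $(\Cc,\mu_\tau)$ and supplies an index $i$ with $\min(\mu_\tau(C_i=0),\mu_\tau(C_i=1))\ge \tfrac{1}{36\ADV(\Cc)^2}$; query that $i$. If instead $\max_c\mu_\tau(c)\ge 5/6$, the strategy \emph{bails out} (stops querying). Upon halting it outputs $\argmax_c\mu_\tau(c)$.

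The point of the analysis is to track progress in bits of \emph{entropy} rather than in $\log$ of the support size; this is exactly where the $\log\ADV$-factor is gained, since Lemma~\ref{lem:goodi} only guarantees a query index whose outcome shrinks the support by a $1-\Theta(1/\ADV(\Cc)^2)$ factor, but the \emph{same} query has answer-entropy $\Theta\!\big(\tfrac{\log\ADV(\Cc)}{\ADV(\Cc)^2}\big)$. Let $\mathbf{A}\sim\mu$, let $\mathbf{T}_s$ be the transcript after $s$ steps, and set $E_s=H(\mathbf{A}\mid\mathbf{T}_s)$, so $E_0=H(\mu)\le\log|\Cc|$ and $E_t\ge 0$. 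Since the query index at step $s+1$ is a function of $\mathbf{T}_s$ and its answer is a function of $\mathbf{A}$, we get $E_s-E_{s+1}=I(\mathbf{A}:\text{answer}_{s+1}\mid\mathbf{T}_s)=H(\text{answer}_{s+1}\mid\mathbf{T}_s)$. On a transcript that has not bailed out, the answer is a bit whose bias lies in $[\tfrac{1}{36\ADV(\Cc)^2},1-\tfrac{1}{36\ADV(\Cc)^2}]$, hence has conditional entropy $\ge h:=H\!\big(\tfrac{1}{36\ADV(\Cc)^2}\big)$; on a bailed-out transcript the contribution is $0$. Therefore $E_s-E_{s+1}\ge p_s\, h$, where $p_s:=\Pr[\text{not bailed out by step }s]$, and $p_s$ is non-increasing in $s$ because once the strategy bails it never resumes. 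Telescoping over $s=0,\dots,t-1$ gives $\log|\Cc|\ge E_0-E_t\ge h\sum_{s<t}p_s\ge h\,t\,p_{t-1}$, so $p_t\le p_{t-1}\le \tfrac{\log|\Cc|}{h\,t}$.

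Now bound the error. Conditioned on the final transcript $\tau$, the target is distributed as $\mu_\tau$ and the output is its mode, so the error probability under $\mu$ equals $\Exp_{\mathbf{T}_{\mathrm{fin}}}\big[1-\max_c\mu_{\mathbf{T}_{\mathrm{fin}}}(c)\big]$. On bailed-out transcripts this is $\le 1/6$ (by definition of bailing out); on the transcripts that ran the full $t$ steps without bailing — of total probability $\le p_t$ — bound it by $1$. Hence the error is at most $\tfrac16+p_t\le \tfrac16+\tfrac{\log|\Cc|}{h\,t}$, which is $\le\tfrac13$ once $t\ge 6\log|\Cc|/h$. Finally, $h=H\!\big(\tfrac{1}{36\ADV(\Cc)^2}\big)\ge \tfrac{1}{36\ADV(\Cc)^2}\log\!\big(36\ADV(\Cc)^2\big)=\Theta\!\big(\tfrac{\log\ADV(\Cc)}{\ADV(\Cc)^2}\big)$, so one may take $t=O\!\big(\tfrac{\ADV(\Cc)^2}{\log\ADV(\Cc)}\log|\Cc|\big)$, and the minimax step of the first paragraph turns $\mathcal{T}_\mu$ into the claimed randomized learner.

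I expect the main obstacle to be the worst-case-versus-average-case gap: the entropy telescoping is inherently a statement about a random target, and a plain deterministic ``shrink the consistent set to a singleton'' strategy provably cannot meet the improved bound on a worst-case target, so the minimax reduction is essential. Making that reduction work, in turn, forces the entropy argument to be run against an arbitrary posterior distribution $\mu$ rather than against the uniform one — which is why the ``bail out when the posterior is $5/6$-concentrated'' device and the separate accounting of bailed versus non-bailed leaves are needed, since for a general $\mu$ the posterior can concentrate past $5/6$ even while supported on $\ge 2$ concepts, and at such nodes Lemma~\ref{lem:goodi} no longer applies.
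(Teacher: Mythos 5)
Your proposal is correct and follows essentially the same route as the paper: a Yao/minimax reduction to an arbitrary prior $\mu$, Lemma~\ref{lem:goodi} to find a mildly balanced query whenever the posterior is not $5/6$-concentrated, and a conditional-entropy potential $H(\mathbf{A}\mid\text{transcript})$ that drops by $\Omega(\log\ADV(\Cc)/\ADV(\Cc)^2)$ per such query, which is exactly where the $\log\ADV$ saving comes from. The only differences are cosmetic (the paper queries the max-entropy index and tracks the concentration probability $P_t$ over a fixed horizon, while you query the lemma's index, bail out upon concentration, and telescope with the survival probability $p_s$), so the argument matches the paper's proof.
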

	
	\begin{proof}
		Fix an arbitrary distribution $\mu$ on $\Cc$.
		We will construct a deterministic classical learner for $\Cc$ with success probability $\geq 2/3$ under~$\mu$. Since we can do this for every $\mu$, the ``Yao principle''~\cite{yao:unified} then implies the existence of a randomized learner that has success probability $\geq 2/3$ for every $c\in\Cc$.
		
		Consider the following algorithm, whose input is an $N$-bit random variable $C\sim\mu$:
		\begin{enumerate}
			\item Choose an $i$ that maximizes $H(C_i)$ and query that $i$.\footnote{Querying this~$i$ will give a fairly ``balanced'' reduction of the size of $\Cc$ irrespective of the outcome of the query. If there are several maximizing $i$s, then choose the smallest $i$ to make the algorithm deterministic.}
			\item Update $\Cc$ and $\mu$ by restricting to the concepts that are consistent with the query outcome.
			\item Goto~1.
		\end{enumerate}
		The queried indices are themselves random variables, and we denote them by $I_1,I_2,\ldots$. We can think of $t$ steps of this algorithm as generating a binary tree of depth~$t$, where the different paths correspond to the different queries made and their binary outcomes.
		
		Let $P_t$ be the probability that, after $t$ queries, our algorithm has reduced $\mu$ to a distribution that has weight $\geq 5/6$ on one particular $c$:
		\begin{align*}
		    & P_t = \\
		    & \sum_{
		    \substack{i_1,\ldots,i_t\in[N] \\ b\in\01^t}}
		    \Pr[I_1=i_1,\ldots,I_t=i_t,C_{i_1}\ldots C_{i_t}=b]\cdot \left[\exists c\in\Cc~s.t.~\mu(c\mid C_{i_1}\ldots C_{i_t}=b)\geq \frac{5}{6}\right].
		\end{align*}
		Because restricting $\mu$ to a subset $\Cc'\subseteq\Cc$ cannot decrease probabilities of individual~$c\in\Cc'$, this probability~$P_t$ is non-decreasing in~$t$. Because $N$ queries give us the target concept completely, we have~$P_N=1$. 
		Let $T$ be the smallest integer~$t$ for which $P_t\geq 5/6$. We will run our algorithm for $T$ queries, and then output the $c$ with highest probability under the restricted version of $\mu$ we now have.  With $\mu$-probability at least $5/6$, that $c$ will have probability at least $5/6$ (under $\mu$ conditioned on the query-results).
		The overall error probability under $\mu$ is therefore $\leq 1/6+1/6=1/3$.
		
		It remains to upper bound~$T$. To this end, define the following ``energy function'' in terms of conditional entropy:
		\begin{align*}
		    E_t 
		    &= H(C\mid C_{I_1},\ldots,C_{I_t}) \\
		    &=
		    \sum_{
		    \substack{i_1,\ldots,i_t\in[N] \\ b\in\01^t}}
		    \Pr[I_1=i_1,\ldots,I_t=i_t,C_{i_1}\ldots C_{i_t}=b]\cdot H(C\mid C_{i_1}\ldots C_{i_t}=b).
		\end{align*}
		Because conditioning on a random variable cannot increase entropy, $E_t$ is non-increasing in $t$. We will show below that as long as $P_t<5/6$, the energy shrinks significantly with each new query.
		
		Let $C_{i_1}\ldots C_{i_t}=b$ be such that there is no $c\in\Cc$~s.t.~$\mu(c\mid C_{i_1}\ldots C_{i_t}=b)\geq 5/6$ (note that this event happens in our algorithm with $\mu$-probability $1-P_t$). Let $\mu'$ be $\mu$ restricted to the class $\Cc'$ of concepts $c$ where $c_{i_1}\ldots c_{i_t}=b$. The nonnegative adversary bound for this restricted concept class is $A'=\ADV(\Cc')\leq\ADV(\Cc)=A$. 
		Applying Lemma~\ref{lem:goodi} to $\mu'$, there is an $i_{t+1}\in[N]$ with $p:=\min(\mu'(C_{i_{t+1}}=0),\mu'(C_{i_{t+1}}=1))\geq \frac{1}{36A'^2}\geq \frac{1}{36 A^2}$. Note that $H(p)\geq \Omega(\log(A)/A^2)$. Hence 
		\begin{align*}
		H(C\mid C_{i_1}\ldots C_{i_t}=b)- H(C\mid C_{i_1}\ldots C_{i_t}=b,C_{i_{t+1}})
		&= H(C_{i_{t+1}}\mid C_{i_1}\ldots C_{i_t}=b) \\
		&\geq \Omega(\log(A)/A^2).
		\end{align*}
		This implies $E_t-E_{t+1}\geq (1-P_t)\cdot\Omega(\log(A)/A^2)$. 
		In particular, as long as $P_t<5/6$, the $(t+1)$st query shrinks $E_t$ by at least $\frac{1}{6}\Omega(\log(A)/A^2)=\Omega(\log(A)/A^2)$. Since $E_0=H(C)\leq \log|\Cc|$ and $E_t$ cannot shrink below~0, there can be at most $\displaystyle O\left(\frac{A^2}{\log A}\log|\Cc|\right)$ queries before $P_t$ grows to $\geq 5/6$.
	\end{proof}
	
	Since $\ADV(\Cc)$ lower bounds $Q(\Cc)$, Theorem~\ref{th:DvsADV} implies the bound $$R(\Cc)\leq O\left(\frac{Q(\Cc)^2}{\log Q(\Cc)}\log|\Cc|\right)$$ claimed in our introduction.
	Note that this bound is tight up to a constant factor for the class of $N$-bit point functions, where $Q(\Cc) = \Theta(\sqrt{N})$, $|\Cc|=N$, and $R(\Cc)=\Theta(N)$ classical queries are necessary and sufficient.

	\section{Future work}
	
	Neither of our two results is tight. As directions for future work, let us state two conjectures, one for each model:
	\begin{itemize}
		\item $k$-Fourier-sparse functions can be learned from $O(k\cdot\polylog(k))$ uniform quantum~examples.
		
		\item For all concept classes $\Cc$ of Boolean-valued functions on a domain of size~$N$ we have:\\ $R(\Cc)=O(Q(\Cc)^2 + Q(\Cc)\log N)$.
	\end{itemize}
	
\paragraph{Acknowledgements.} We thank Swagato Sanyal for pointing out an error in a previous version of this paper.

\bibliographystyle{plainnat}
\bibliography{qcs}

\end{document}